\newtheorem{theorem}{Theorem}
\newtheorem{lemma}{Lemma}
\newtheorem{corollary}[lemma]{Corollary}
\newtheorem{proposition}[lemma]{Proposition}
\theoremstyle{definition}
\newtheorem{definition}[lemma]{Definition}
\newtheorem{fact}[lemma]{Fact}
\theoremstyle{remark}
\newtheorem{remark}[lemma]{Remark}
\newcommand{\sign}{\mathrm{sign}}
\title{Better Protocol for XOR Game using Communication Protocol and Nonlocal Boxes}
\author{Ryuhei Mori\thanks{Supported by MEXT KAKENHI Grant Number JP24106008 and JSPS KAKENHI Grant Number JP17K17711.}}
\date{%
School of Computing, Tokyo Institute of Technology, Tokyo, Japan.\\
\texttt{mori@c.titech.ac.jp}\\
}
\begin{document}
\maketitle
\begin{abstract}
\normalsize
Buhrman showed that an efficient communication protocol implies a reliable XOR game protocol.
This idea rederives Linial and Shraibman's lower bound of randomized and quantum communication complexities, which was derived by using factorization norms, with worse constant factor in much more intuitive way.
In this work, we improve and generalize Buhrman's idea, and obtain a class of lower bounds for randomized communication complexity
including an exact Linial and Shraibman's lower bound as a special case.
In the proof, we explicitly construct a protocol for XOR game from a randomized communication protocol
by using a concept of nonlocal boxes and Paw\l owski et~al.'s elegant protocol, which was used for showing the violation of information causality in superquantum theories.
\end{abstract}

\vspace{1em}
\noindent
\textbf{Keywords:} Communication complexity, XOR game, nonlocality, CHSH inequality, Fourier analysis.

\thispagestyle{empty}
\clearpage
\setcounter{page}{1}

\section{Introduction}
Communication complexity is one of the central tool in theoretical computer science.
In this work, we investigate an extremely simple technique for lower bounding communication complexity and XOR-amortized communication complexity.
Let $C_{\rho}(f,\mu)$ be a deterministic communication complexity for computing $f\colon\{0,1\}^n\times\{0,1\}^n\to\{0,1\}$ with an error probability at most $(1-\rho)/2$ on an input distribution $\mu$.
Let $\beta(f,\mu)$ be the largest bias of an XOR game for $f\colon\{0,1\}^n\times\{0,1\}^n\to\{0,1\}$ on an input distribution $\mu$, i.e., the largest winning probability of the XOR game is $(1+\beta(f,\mu))/2$.
Buhrman gave a clear argument on a relationship of $C_\rho(f,\mu)$ and $\beta(f,\mu)$~\cite{TCS-040, briet2013multipartite}.
\begin{fact}[Buhrman's argument a.k.a.\ discrepancy bound]\label{fa:briet}
For any $\rho\in[0,1]$,
\begin{equation*}
\beta(f, \mu) \ge \rho 2^{-C_{\rho}(f, \mu)}.
\end{equation*}
\end{fact}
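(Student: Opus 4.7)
The plan is to convert the deterministic communication protocol directly into a shared-randomness XOR-game strategy and show that its expected bias is at least $\rho\cdot 2^{-c}$, where $c=C_\rho(f,\mu)$. The one structural fact I would invoke is the standard observation that a $c$-bit deterministic protocol $\Pi$ partitions the input space into at most $2^c$ combinatorial rectangles $A_t\times B_t$, indexed by transcripts $t\in\{0,1\}^c$, each carrying a fixed output label $z_t\in\{0,1\}$.

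For the strategy, Alice and Bob share a uniformly random transcript $t$; in addition, each flips a private fair coin, call them $s_A$ and $s_B$. Alice outputs $z_t$ if $x\in A_t$ and $s_A$ otherwise, while Bob outputs $0$ if $y\in B_t$ and $s_B$ otherwise. The design principle is that for any fixed $(x,y)$ exactly one transcript $t^\star=\Pi(x,y)$ satisfies $(x,y)\in A_{t^\star}\times B_{t^\star}$, and only on that transcript can the joint output $a\oplus b$ carry any signal about $f$.

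For the analysis I would condition on $t$. With probability $2^{-c}$ we hit $t=t^\star$, in which case neither fallback coin is used and $a\oplus b=z_{t^\star}=\Pi(x,y)$, which agrees with $f(x,y)$ with probability $(1+\rho)/2$ under $\mu$. With probability $1-2^{-c}$ we miss $t^\star$; then at least one of $a,b$ is an independent fair coin, $a\oplus b$ is uniform, and it agrees with $f(x,y)$ with probability $1/2$. Summing, the winning probability is $2^{-c}(1+\rho)/2+(1-2^{-c})/2=(1+\rho\cdot 2^{-c})/2$, so the bias is at least $\rho\cdot 2^{-c}$; since averaging over the shared randomness cannot decrease the best deterministic bias, some deterministic XOR strategy attains the same bound.

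I do not foresee a substantive obstacle; the only non-obvious design point is the independent fallback coins $s_A,s_B$. Without them—say, if both parties simply outputted a fixed bit on the complements of their rectangles—each of the $2^c-1$ "wrong" transcripts would leak a potentially biased correlation into $a\oplus b$, and the clean pigeonhole-style factor $2^{-c}$ would be replaced by an uncontrolled sum. Independent fallbacks exactly zero out those cross-terms, leaving only the one useful rectangle $t^\star$ to contribute its $2^{-c}$ share of the protocol's bias.
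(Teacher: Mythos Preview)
Your proposal is correct and is essentially the same argument as the paper's: share a uniformly random candidate transcript, have each party check local consistency, output the protocol's answer (resp.\ $0$) when consistent and an independent fair coin otherwise, and observe that the unique correct transcript contributes bias $\rho$ with weight $2^{-c}$ while every other transcript contributes bias $0$. Your explicit use of the rectangle partition and the private fallback coins $s_A,s_B$ is just a cleaner phrasing of the paper's ``check consistency / output a uniform bit if inconsistent'' step.
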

\begin{proof}
We will construct a protocol for the XOR game of $f$ with the bias $\rho 2^{-C_\rho(f,\mu)}$ using the communication protocol $P$ corresponding to $C_\rho(f,\mu)$.
Alice and Bob use $C_\rho(f,\mu)$ shared random bits, and regard them as a transcript of the communication protocol $P$.
Alice and Bob check the consistency at each side, and output a uniform random bit if they are inconsistent.
If the random bits are consistent at each side, Alice and Bob output the output of the communication protocol $P$ and 0, respectively.

If the shared random bits are inconsistent transcript at least one side, the XOR of their output is a uniform random bit.
If the shared random bits are consistent transcript at the both sides, the XOR of their output is the output of the communication protocol $P$, which is $f(x,y)$ with probability at least $(1+\rho)/2$.
Hence, the winning probability of this protocol for the XOR game is at least $2^{-C_\rho(f,\mu)}(1+\rho)/2 + (1-2^{-C_\rho(f,\mu)})(1/2) = (1+\rho 2^{-C_\rho(f,\mu)})/2$.
\end{proof}
Buhrman's argument gives a lower bound $C_{\rho}(f, \mu)\ge \log\frac{\rho}{\beta(f,\mu)}$ of the communication complexity, which is equivalent to the well-known discrepancy bound up to an additive constant.
Buhrman's argument gave an operational meaning to the discrepancy bound.
Furthermore, the above argument with small modification gives Linial and Shraibman's lower bound, which was obtained by using factorization norms,
 with a worse constant factor~\cite{linial2009lower, TCS-040}.
This argument can be straightforwardly generalized to a quantum setting~\cite{TCS-040, briet2013multipartite}.
Let $C^*_{\rho}(f,\mu)$ be a quantum communication complexity, which is the number of bits transmitted (not qubits), for computing $f\colon\{0,1\}^n\times\{0,1\}^n\to\{0,1\}$
using shared quantum states with an error probability at most $(1-\rho)/2$ on an input distribution $\mu$.
Let $\beta^*(f,\mu)$ be the largest bias of an XOR game using shared quantum states for $f\colon\{0,1\}^n\times\{0,1\}^n\to\{0,1\}$ on an input distribution $\mu$.
Then, a straightforward generalization of Fact~\ref{fa:briet} shows $C_{\rho}^*(f, \mu)\ge \log\frac{\rho}{\beta^*(f,\mu)}$.
In the following, we demonstrate how to show $\max_\mu C^*_\rho(\mathrm{IP}_n,\mu)= \Theta(n)$ for any constant $\rho\in(0,1]$ where $\mathrm{IP}_n(x,y):=\langle x,y\rangle:=\bigoplus_{i=1}^n x_i\wedge y_i$.
Let $f^{\oplus \ell}(x_1,\dotsc,x_{\ell n}):=f(x_1,\dotsc,x_n)\oplus\dotsb\oplus f(x_{(\ell-1)n+1},\dotsc,x_{\ell n})$,
and $\mu^{\otimes \ell}(x_1,\dotsc,x_{\ell n}):=\mu(x_1,\dotsc,x_n)\times\dotsm\times\mu(x_{(\ell-1)n+1},\dotsc,x_{\ell n})$.
Cleve et al.\ showed that the XOR game in the quantum physics satisfies the perfect parallel repetition theorem, i.e., $\beta^*(f^{\oplus \ell}, \mu^{\otimes \ell}) = \beta^*(f,\mu)^\ell$~\cite{Cleve2008}.
From the perfect parallel repetition theorem, we obtain
\begin{equation*}
\min_\mu \beta^*(\mathrm{IP}_n,\mu)\le
\min_\nu \beta^*(\mathrm{IP}_n,\nu^{\otimes n})= \min_\nu \beta^*(\mathrm{AND},\nu)^n 
=2^{-\frac{n}2}.
\end{equation*}
The last equality is obtained by the Tsirelson bound~\cite{cirelson1980} (In fact, the above inequality is equality. It is easy to see that the worst input distribution for $\mathrm{IP}_n$ is the uniform distribution).
Hence, we obtain
\begin{equation*}
\max_\mu C^*_{\rho}(\mathrm{IP}_n, \mu) \ge \log\frac{\rho}{\min_\mu \beta^*(\mathrm{IP}_n,\mu)}\ge \frac{n}2 + \log \rho.
\end{equation*}
This bound was obtained by Kremer~\cite{kremer1995quantum} and Linial and Shraibman~\cite{linial2009lower}.
However, the above derivation is extremely simple and intuitive, and only needs Buhrman's argument, Cleve et al.'s perfect parallel repetition theorem and the Tsirelson bound.

In this work, we improve Buhrman's argument, Fact~\ref{fa:briet}, for deterministic and randomized communication complexities by using quantum theory.
First, we obtain
\begin{equation*}
\beta^*(f, \mu) \ge \rho 2^{-\frac12C_{\rho}(f, \mu)}
\end{equation*}
or equivalently
\begin{equation*}
C_{\rho}(f, \mu) \ge 2\log\frac{\rho}{\beta^*(f, \mu)}.
\end{equation*}
Note that this lower bound is worse than Linial and Shraibman's lower bound~\cite{linial2009lower}, but is easier to evaluate.
Since $\beta^*(f,\mu)\ge\beta(f,\mu)$, an improvement from the previous lower bound $\log\frac{\rho}{\beta(f,\mu)}$ is at most a factor 2.
Although this improvement is typically not significant, in this work, we obtain non-trivial lower bounds of XOR-amortized communication complexities
of the equality function.

\begin{theorem}\label{thm:exm}
For any $\rho\in(0,1]$,
\begin{align*}
\lim_{n\to\infty} \lim_{\ell\to\infty} \max_{\mu} \frac1\ell C_{\rho^\ell}(\mathrm{EQ}_n^{\oplus\ell}, \mu^{\otimes\ell})&\ge 2\log 3 + 2\log \rho\\
\lim_{n\to\infty} \lim_{\ell\to\infty} \max_{\nu} \frac1\ell C_{\rho^\ell}(\mathrm{EQ}_n^{\oplus\ell}, (\nu^{\otimes n})^{\otimes\ell})&\ge 2\log \left(1+\frac{2}{\sqrt{3}}\right) + 2\log \rho
\end{align*}
where $\mu$ and $\nu$ are probability distributions on $\{0,1\}^n\times\{0,1\}^n$ and $\{0,1\}\times\{0,1\}$, respectively.
Here, $2\log 3 \approx 3.1699$ and $2\log(1+2/\sqrt{3})\approx 2.2150$.
\end{theorem}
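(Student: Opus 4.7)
My plan is to combine the improved Buhrman bound $C_\rho(f,\mu)\ge 2\log(\rho/\beta^*(f,\mu))$ stated earlier in the paper with Cleve et~al.'s perfect parallel repetition theorem $\beta^*(f^{\oplus\ell},\mu^{\otimes\ell})=\beta^*(f,\mu)^\ell$. Applying both to $f=\mathrm{EQ}_n^{\oplus\ell}$, distribution $\mu^{\otimes\ell}$, and confidence $\rho^\ell$ yields
\[
\frac{1}{\ell}C_{\rho^\ell}(\mathrm{EQ}_n^{\oplus\ell},\mu^{\otimes\ell})\ge 2\log\rho-2\log\beta^*(\mathrm{EQ}_n,\mu),
\]
which is independent of $\ell$, trivializing the inner limit. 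Maximizing over $\mu$ (resp.\ over product distributions $\nu^{\otimes n}$) reduces the theorem to the asymptotic bounds $\limsup_{n\to\infty}\min_\mu\beta^*(\mathrm{EQ}_n,\mu)\le 1/3$ and $\limsup_{n\to\infty}\min_\nu\beta^*(\mathrm{EQ}_n,\nu^{\otimes n})\le 1/(1+2/\sqrt 3)$.

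For the first inequality, Tsirelson's theorem writes $\beta^*(\mathrm{EQ}_n,\mu)$ as a vector program over unit vectors $u_x,v_y$; the strategy set is compact and convex and $\mu$ lives in a simplex, so von~Neumann's minimax gives
\[
\min_\mu\beta^*(\mathrm{EQ}_n,\mu)=\max_{\|u_x\|=\|v_y\|=1}\min_{x,y}(-1)^{\mathrm{EQ}_n(x,y)}\langle u_x,v_y\rangle.
\]
Averaging an optimal assignment over the $S_{2^n}$-action on inputs (which leaves everything invariant) reduces to the two-parameter Ansatz $\langle u_x,v_x\rangle=a$, $\langle u_x,v_y\rangle=b$ for $x\ne y$. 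Parametrizing $u_x=\alpha e_0+\beta f_x$ and $v_y=\alpha e_0-\beta f_y$ with $\{f_x\}$ orthonormal in a subspace orthogonal to $e_0$, a short calculation shows the realizability region is $a\ge 2b-1$ when $b\ge 0$. Combined with $a\le-\lambda$ and $b\ge\lambda$ this forces $\lambda\le 1/3$, with equality at $(a,b)=(-1/3,1/3)$, $\alpha^2=1/3$, $\beta^2=2/3$.

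The product case is harder because the symmetric group on $\{0,1\}^n$ is broken and the inner optimization over $\nu$ is a degree-$n$ polynomial (so minimax fails). Instead, writing $(-1)^{\mathrm{EQ}_n(x,y)}=1-2\prod_i\delta_{x_i,y_i}$ exhibits the sign matrix as a difference of tensor products,
\[
\nu^{\otimes n}(x,y)(-1)^{\mathrm{EQ}_n(x,y)}=P^{\otimes n}(x,y)-2Q^{\otimes n}(x,y),
\]
with $P(x,y)=\nu(x,y)$ and $Q(x,y)=\nu(x,x)\delta_{x,y}$ two-by-two matrices. Specializing to a product strategy $u_x=\bigotimes_i u^{(i)}_{x_i}$, $v_y=\bigotimes_i v^{(i)}_{y_i}$ using the same one-bit vectors in every coordinate, the bias collapses to $A^n-2B^n$ with $A=\sum_{xy}\nu(x,y)\langle u_x,v_y\rangle$ and $B=\sum_x\nu(x,x)\langle u_x,v_x\rangle$. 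I would restrict to symmetric $\nu$ (invariant under $0\leftrightarrow 1$ and $x\leftrightarrow y$), parametrize the one-bit vectors by an angle, and solve the closed-form two-variable minimax $\min_\nu\max_{u,v}|A^n-2B^n|$; the $n\to\infty$ limit should give $1/(1+2/\sqrt 3)$. To promote this product-strategy bound to a bound on $\beta^*$ itself, a matching dual SDP certificate exploiting the tensor decomposition above is needed.

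The main obstacle is precisely the second case: pinpointing the exact constant $1/(1+2/\sqrt 3)$ requires an asymptotic analysis of $A^n-2B^n$ and a careful dual SDP construction ruling out non-product strategies that might improve the bound. The first case is comparatively clean, being just minimax plus the two-parameter symmetric Ansatz and an elementary realizability calculation.
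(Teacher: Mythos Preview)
Your reduction via the improved Buhrman bound and perfect parallel repetition matches the paper's Corollary~\ref{cor:main} exactly, so the task is indeed to bound $\min_\mu\beta^*(\mathrm{EQ}_n,\mu)$ and $\min_\nu\beta^*(\mathrm{EQ}_n,\nu^{\otimes n})$.

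For these biases, your route through the Tsirelson SDP and symmetry averaging is genuinely different from the paper's. The paper exploits a structural fact you do not use: $\mathrm{EQ}_n$ is (the negation of) an \emph{XOR function}, namely $\mathrm{OR}_n^\oplus$ where $\mathrm{OR}_n^\oplus(x,y):=\mathrm{OR}_n(x\oplus y)$. For XOR functions on XOR input distributions $2^{-2n}q^\oplus$, Linden et~al.\ showed $\beta^*=\beta$, and a short Fourier calculation gives $\beta(g^\oplus,2^{-2n}q^\oplus)=\|\widehat{gq}\|_\infty$. Moreover the worst distribution is always XOR. Both cases then become explicit Fourier optimizations: for general $\mu$, the paper chooses $q$ supported on Hamming weights $0$ and $2$ and computes $\min_q\|\widehat{\mathrm{OR}_nq}\|_\infty\to 1/3$; for product $\nu^{\otimes n}$, one has $\widehat{\mathrm{OR}_n(2\nu)^{\otimes n}}(S)=2\nu(0)^n-(2\nu(0)-1)^{|S|}$, reducing to a one-variable minimax in $\nu(0)$ whose limit is $2\sqrt{3}-3=1/(1+2/\sqrt{3})$.

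Your SDP approach for the first case is essentially correct, but the assertion ``realizability region is $a\ge 2b-1$'' is stated without justification; in fact it fails for small $N$ (e.g.\ $N=2$, $u_1=e_1,u_2=e_2,v_1=e_2,v_2=e_1$ gives $a=0,b=1$) and only holds asymptotically, so a careful PSD analysis of the symmetrized Gram matrix is needed. The real gap is the second case, which you yourself flag as the main obstacle: restricting to product strategies gives only a lower bound on $\beta^*$, and your proposed dual SDP certificate is not constructed. The paper's Fourier approach sidesteps this entirely because the identity $\beta^*(g^\oplus,2^{-2n}q^\oplus)=\|\widehat{gq}\|_\infty$ already rules out any quantum advantage, so no SDP duality is required; the product case is then no harder than the general one.
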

Since randomized/distributional communication complexity of the equality function is constant~\cite{kushilevitz1997communication}, these improvements are meaningful.
If the factor 2 is missing, the above two lower bounds for the equality function are smaller than 2 for $\rho=1$.
In this sense, this improvement is significant.

By applying an argument for generalized discrepancy theory~\cite{briet2013multipartite, TCS-040}, we can further improve this bound, and obtain 
a class of lower bounds for randomized communication complexity including an exact Linial and Shraibman's lower bound as a special case.
\begin{theorem}[Linial and Shraibman's lower bound~\cite{linial2009lower, TCS-040}]\label{thm:ls}
For any $f\colon\{0,1\}^n\times\{0,1\}^n\to\{0,1\}$ and $\epsilon\in[0,1/2]$,
\begin{equation}
R_\epsilon(f)\ge 2  \log\max_{h, \mu}\frac{(1-\epsilon)\mathbb{E}_{(x,y)\sim \mu}[f(x,y)h(x,y)]-\epsilon}{\beta^*(h,\mu)}
\label{eq:ls}
\end{equation}
where $R_\epsilon(f)$ is the randomized communication complexity of $f$ with an error probability at most $\epsilon$, where $h\colon\{0,1\}^n\times\{0,1\}^n\to\{0,1\}$, and where
$\mu$ is a probability distribution on $\{0,1\}^n\times\{0,1\}^n$.
\end{theorem}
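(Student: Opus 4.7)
The plan is to combine the improved quantum Buhrman argument (established earlier in the paper as $\beta^*(f,\mu)\ge\rho\,2^{-C_\rho(f,\mu)/2}$) with the standard generalized-discrepancy substitution: route the parity output of the resulting quantum XOR protocol against the auxiliary target $h$ rather than $f$ itself, and let the mismatch between $f$ and $h$ show up as a correlation factor on $\mu$.

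First I would derandomize. By Yao's principle, a public-coin protocol for $f$ of cost $C=R_\epsilon(f)$ with pointwise error $\epsilon$ is a mixture of deterministic protocols $\{P_r\}$ of cost $C$ with $\Pr_r[P_r(x,y)\ne f(x,y)]\le\epsilon$ for every input $(x,y)$. For each fixed $r$ I would invoke the nonlocal-box subroutine underlying the improved Buhrman bound; this is essentially function-agnostic and produces a quantum XOR protocol whose outputs $A,B$ satisfy
\begin{equation*}
\mathbb{E}\bigl[(-1)^{A\oplus B}\bigm|x,y,r\bigr]\ =\ 2^{-C/2}\,(-1)^{P_r(x,y)}.
\end{equation*}
Using it as an XOR game protocol for $h$ on $\mu$ and averaging over $r$ and $(x,y)\sim\mu$,
\begin{equation*}
\beta^*(h,\mu)\ \ge\ 2^{-C/2}\,\mathbb{E}_{(x,y)\sim\mu}\Bigl[(-1)^{h(x,y)}\,\mathbb{E}_r\bigl[(-1)^{P_r(x,y)}\bigr]\Bigr].
\end{equation*}

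Next, with $\epsilon_{x,y}:=\Pr_r[P_r(x,y)\ne f(x,y)]\le\epsilon$, we have $\mathbb{E}_r[(-1)^{P_r(x,y)}]=(-1)^{f(x,y)}(1-2\epsilon_{x,y})$, so the integrand becomes $(-1)^{f\oplus h}(1-2\epsilon_{x,y})$. Splitting the expectation according to the sign of $(-1)^{f\oplus h}$ and using $1-2\epsilon_{x,y}\in[1-2\epsilon,1]$ — tight lower bound on the $+1$ set, tight upper bound on the $-1$ set — a one-line computation gives
\begin{equation*}
\mathbb{E}_\mu\bigl[(-1)^{f\oplus h}(1-2\epsilon_{x,y})\bigr]\ \ge\ (1-\epsilon)\,\mathbb{E}_\mu\bigl[(-1)^{f\oplus h}\bigr]-\epsilon,
\end{equation*}
which under the convention $fh:=(-1)^{f\oplus h}$ is exactly the numerator of \eqref{eq:ls}. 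Rearranging the resulting inequality $\beta^*(h,\mu)\ge 2^{-C/2}\bigl[(1-\epsilon)\mathbb{E}_\mu[fh]-\epsilon\bigr]$ and maximising over $h$ and $\mu$ delivers the stated lower bound on $R_\epsilon(f)=C$.

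The main obstacle is the first step: extracting a \emph{function-agnostic} form of the improved Buhrman construction. The inequality as stated, $\beta^*(f,\mu)\ge\rho\,2^{-C_\rho(f,\mu)/2}$, bundles the Paw\l owski-style transcript-simulation subroutine (which contributes $2^{-C/2}$) with the deterministic protocol's own bias $\rho$ in computing $f$. What I really need to peel off is the pointwise identity above — that the parity $A\oplus B$ produced by the subroutine is correlated with the protocol's \emph{output bit} $P_r(x,y)$ with bias exactly $2^{-C/2}$, independently of what that bit is supposed to mean. Once this observation is read off the earlier construction, the remainder of the argument is standard generalized-discrepancy algebra in the spirit of Linial--Shraibman.
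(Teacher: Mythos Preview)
Your approach is essentially the paper's: apply the Paw\l owski construction of Proposition~\ref{prop:main} to (each deterministic branch of) a randomized protocol of cost $R_\epsilon(f)$, note that the nonlocal-box noise of bias $\delta^{R_\epsilon(f)}$ is independent of the two remaining correlated errors (protocol vs.\ $f$, and $f$ vs.\ $h$), and lower-bound the bias of their XOR by $(1-\epsilon)\mathbb{E}_\mu[fh]-\epsilon$ exactly as you do. The pointwise \emph{equality} you flag as the main obstacle is obtained by first padding the protocol tree to uniform depth $C$, so that every branch of the $\mathrm{Addr}_1$ chain uses exactly $C$ isotropic boxes; the paper leaves this implicit as well.
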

While the original proof of Theorem~\ref{thm:ls} uses factorization norms of matrices,
the proof in this paper is based on Buhrman's argument, which derived Theorem~\ref{thm:ls} in an intuitive way without the constant factor 2~\cite{TCS-040, briet2013multipartite}.
For obtaining the constant factor 2, we use a concept of nonlocal boxes and Paw\l owski et al.'s idea, which showed the violation of information causality in superquantum theories~\cite{pawlowski2009information}.
In the generalized lower bounds, the constant factor 2 is replaced by larger constants while $\beta^*(h,\mu)$ is also replaced by larger quantities.
Although any concrete lower bound by the generalized lower bound is not obtained in this paper, the generalized lower bound may improve Linial and Shraibman's lower bound at most a constant factor.

\section{Preliminaries}
\subsection{Nonlocal box}
The nonlocal box is an abstract device with two input ports and two output ports.
When $x\in\{0,1\}$ and $y\in\{0,1\}$ are given to a nonlocal box, 
a nonlocal box randomly outputs $a\in\{0,1\}$ and $b\in\{0,1\}$.
A nonlocal box is specified by a conditional probability distribution $p(a,b\mid x, y)$.
Here, a nonlocal box is an abstract device representing an ``entangled state'' where $x$ and $y$ correspond to a choice of ``measurements'', and where
$a$ and $b$ correspond to ``outcomes'' of the measurements $x$ and $y$, respectively.
Hence, the conditional probability distribution must satisfy the no-signaling condition
\begin{align*}
\sum_{b\in\{0,1\}} p(a,b\mid x,0) &= \sum_{b\in\{0,1\}} p(a,b\mid x,1)\\
\sum_{a\in\{0,1\}} p(a,b\mid 0,y) &= \sum_{a\in\{0,1\}} p(a,b\mid 1,y)
\end{align*}
since if the no-signaling condition is violated, two distant parties can communicate only by measuring a shared state, which is a communication faster than light, and must be forbidden.
Let the CHSH probability be
\begin{equation*}
P_{\mathrm{CHSH}} := \frac14\sum_{\substack{a\in\{0,1\},\,b\in\{0,1\},\\x\in\{0,1\},\,y\in\{0,1\}\\ a\oplus b = x\wedge y}} P(a,b\mid x,y).
\end{equation*}
When
\begin{equation*}
p(a,b\mid x,y) =
\begin{cases}
P_{\mathrm{CHSH}}/2,& \text{if } a\oplus b = x\wedge y\\
(1-P_{\mathrm{CHSH}})/2,& \text{otherwise}
\end{cases}
\end{equation*}
the nonlocal box is said to be isotropic.
The CHSH bias $2P_{\mathrm{CHSH}}-1$ is denoted by $\delta$, i.e., $P_{\mathrm{CHSH}} = (1+\delta)/2$.
Classical physics and quantum physics can simulate isotropic nonlocal boxes with CHSH bias up to $1/2$~\cite{PhysRevLett.23.880} and $1/\sqrt{2}$~\cite{cirelson1980}, respectively. 

\subsection{Communication complexities}
Let $C_{\rho}(f,\mu)$ be a deterministic communication complexity for computing $f\colon\{0,1\}^n\times\{0,1\}^n\to\{0,1\}$ with an error probability at most $(1-\rho)/2$ on an input distribution $\mu$.
Let $C^*_{\rho}(f,\mu)$ be a quantum communication complexity,  which is the number of bits transmitted (not qubits), for computing $f\colon\{0,1\}^n\times\{0,1\}^n\to\{0,1\}$ using shared quantum states with an error probability at most $(1-\rho)/2$ on an input distribution $\mu$.
Let $C_{\mathrm{NL}(\delta),\rho}(f,\mu)$ be a communication complexity with isotropic nonlocal boxes with CHSH bias $\delta\ge 1/2$.
Let $R_\epsilon(f)$ be a randomized communication complexity with an error probability at most $\epsilon\in[0,1/2]$.

\begin{remark}
In this paper, all $C_\rho(f, \mu)$ (and its variants) can be replaced by
$C^\oplus_\rho(f, \mu)$ which is the communication complexity for computing $a$ and $b$ by Alice and Bob, respectively such that $a\oplus b = f(x,y)$.
Since the difference between $C_\rho(f,\mu)$ and $C_\rho^\oplus(f,\mu)$ is at most two, and is negligible for the amortized case, in this paper, we use $C_\rho(f,\mu)$ for the simplicity.
\end{remark}

\subsection{XOR game}
In a two-player XOR game $(f,\mu)$, Alice and Bob are given $x\in\{0,1\}^n$ and $y\in\{0,1\}^n$ according to the input distribution $\mu$, and output $a\in\{0,1\}$ and $b\in\{0,1\}$, respectively
without communication for computing $f\colon\{0,1\}^n\times\{0,1\}^n\to\{0,1\}$.
Alice and Bob win if and only if $a\oplus b=f(x,y)$.
Let $\beta(f,\mu)$ be the largest bias (of the winning probability) of an XOR game for $f$ on an input distribution $\mu$, i.e., the largest winning probability of the XOR game
$(f,\mu)$ is $(1+\beta(f,\mu))/2$.
Let $\beta^*(f,\mu)$ be the largest bias of an XOR game using shared quantum states.
Let $\beta_{\mathrm{NL}(\delta)}(f,\mu)$ be the largest bias of an XOR game using isotropic nonlocal boxes with CHSH bias $\delta\ge 1/2$.
If $\mu$ is omitted, we assume the worst input distribution, e.g., $\beta(f):=\min_\mu\beta(f,\mu)$.
It is straightforward to generalize Fact~\ref{fa:briet} to quantum and nonlocal box settings~\cite{briet2013multipartite}.

\begin{fact}\label{fa:briet1}
For any $\rho\in[0,1]$ and $\delta\in[1/2,1]$,
\begin{align*}
\beta_{\mathrm{NL}(\delta)}(f,\mu) &\ge \rho 2^{-C_{\mathrm{NL}(\delta),\rho}(f,\mu)}\\
\beta^*(f,\mu) &\ge \rho 2^{-C^{*}_{\rho}(f,\mu)}.
\end{align*}
\end{fact}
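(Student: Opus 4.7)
The plan is to mimic Buhrman's construction from the proof of Fact~\ref{fa:briet} almost verbatim, with the single modification that Alice and Bob, in addition to sharing uniform random bits, also share whatever side resource (isotropic nonlocal boxes of bias $\delta$, respectively an entangled quantum state) the underlying communication protocol uses. Because the protocol counts only classical bits of communication, the set of possible transcripts is still $\{0,1\}^C$, and the probabilistic bookkeeping that produced the original $2^{-C}$ factor carries over unchanged.

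Concretely, I would fix an optimal protocol $P$ achieving $C := C_{\mathrm{NL}(\delta),\rho}(f,\mu)$ (resp.\ $C^{*}_{\rho}(f,\mu)$) with correctness bias at least $\rho$. For the XOR game, Alice and Bob are given (i) their share of $P$'s resource and (ii) $C$ uniform shared random bits $r = (r_1,\ldots,r_C)$, which they treat as a candidate transcript. Each party then runs $P$ locally, pretending that the stream of exchanged messages is $r$. Whenever round $i$ is their turn to send, they compute the bit $b_i$ that they would actually transmit — a function of their input, their share of the resource together with any outcomes from prior box-uses or measurements, and $r_1,\ldots,r_{i-1}$ — and test $b_i \stackrel{?}{=} r_i$. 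If any of Alice's checks fails she outputs a uniform random bit, otherwise she outputs $P$'s final answer reconstructed from $r$; Bob does the same, except that on consistency he outputs $0$.

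For the analysis I would condition on the event that every check passes. By induction on $i$, the bit $b_i$ depends only on the inputs, the shared resource (used with history $r_{<i}$), and $r_{<i}$; since $r_i$ is uniform and independent of these, each check passes with probability exactly $1/2$ conditional on the previous ones, so the overall consistency probability is $2^{-C}$. Conditional on consistency, $r$ coincides with the honest transcript that $P$ would produce on $(x,y)$ and the resource's realized outcomes, so Alice outputs $P(x,y)$ up to error $(1-\rho)/2$, Bob outputs $0$, and $a \oplus b = f(x,y)$ with probability at least $(1+\rho)/2$. Conditional on the complementary event, at least one of $a,b$ is an independent uniform bit, so $a \oplus b$ is unbiased. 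Combining,
\begin{equation*}
\Pr[a\oplus b = f(x,y)] \ge 2^{-C}\frac{1+\rho}{2} + (1-2^{-C})\frac{1}{2} = \frac{1+\rho 2^{-C}}{2},
\end{equation*}
which yields both inequalities.

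The only point requiring a sanity check — and the only place where one might fear that the generalization could fail — is that the consistency test must be genuinely local: each party's next would-be outgoing bit has to be computable from their input, their share of the resource, and $r_{<i}$, without any communication. For nonlocal boxes this is guaranteed by the no-signaling property (Alice can input to and read her side without perturbing Bob's marginals), and for quantum states it is guaranteed by the tensor-product structure of local operations. I do not foresee any genuine obstacle beyond making these simulation conventions precise.
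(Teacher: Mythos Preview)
Your proposal is correct and is precisely the argument the paper has in mind: the paper does not spell out a proof of Fact~\ref{fa:briet1} at all, but simply states it after remarking that Fact~\ref{fa:briet} generalizes ``straightforwardly'' to the quantum and nonlocal-box settings~\cite{briet2013multipartite}. Your write-up is exactly that straightforward generalization, and the one nontrivial point you flag --- that each party can run their side of $P$ and query the shared resource locally thanks to no-signaling / tensor-product structure --- is the only thing that needs to be checked.
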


\subsection{Fourier analysis}
Let $A\colon \{+1,-1\}^n\to\mathbb{R}$.
Let $\mathbb{E}[A(z)]:=\frac1{2^n}\sum_{z\in\{+1,-1\}^n}A(z)$.
Let $\widehat{A}(S):=\mathbb{E}[A(z)\prod_{i\in S} z_i]$ for any $S\subseteq[n]:=\{1,\dotsc,n\}$.
Here, $\widehat{A}(S)$ is called a Fourier coefficient.
When we consider Fourier coefficients of boolean function $\{0,1\}^n\to\{0,1\}$, $0$ and $1$ are replaced by $+1$ and $-1$, respectively~\cite{odonnell2014analysis}.
Let $\|\widehat{A}\|_1:=\sum_{S\subseteq[n]} |\widehat{A}(S)|$,
$\|\widehat{A}\|_\infty:=\max_{S\subseteq[n]} |\widehat{A}(S)|$ and
$\|\widehat{A}\|_0:=|\{S\subseteq[n]\mid \widehat{A}(S)\ne 0\}|$.

\section{Main theorems}
In~\cite{PhysRevA.94.052130},
it was shown that
\begin{equation*}
\beta_{\mathrm{NL}(\delta)}(f)\ge \delta^{C_\rightarrow(f)}
\end{equation*}
where $C_\rightarrow(f)$ is a zero-error one-way communication complexity of $f$.
In this paper, we improve the above inequality by using a two-way communication complexity.
\begin{theorem}\label{thm:main0}
For any $\rho\in[0,1]$, $\delta\in[1/2,1]$,
\begin{align*}
\beta_{\mathrm{NL}(\delta)}(f,\mu) &\ge \rho\delta^{C_{\rho}(f,\mu)}.
\end{align*}
Hence, for any $\rho\in(0,1]$ and $\delta\in[1/2,1)$,
\begin{align*}
C_{\rho}(f,\mu) &\ge  \frac1{\log\delta^{-1}} \log\frac{\rho}{\beta_{\mathrm{NL}(\delta)}(f,\mu)}.
\end{align*}
\end{theorem}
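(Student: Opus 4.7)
The plan is to strengthen Buhrman's argument of Fact~\ref{fa:briet} by replacing the guess-and-check step (which uses shared random bits as a blind transcript guess) with an explicit XOR-shared simulation of the protocol using isotropic nonlocal boxes. Given a deterministic protocol $P$ of cost $C:=C_\rho(f,\mu)$ that computes $f$ with bias $\rho$ on $\mu$, I will construct, by recursion on the protocol tree, an XOR game protocol whose bias on $\mu$ is at least $\rho\delta^C$. Taking logarithms then yields the second inequality of the theorem.

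Without loss of generality I first pad $P$ with dummy bits so that every root-to-leaf path has length exactly $C$. For each subtree $T'$ of depth $d$ of the padded tree, which computes some function $g_{T'}(x,y)$, I inductively build a simulation producing outputs $\alpha_{T'}$ for Alice and $\beta_{T'}$ for Bob such that, pointwise for every $(x,y)$, $\alpha_{T'}\oplus\beta_{T'}=g_{T'}(x,y)$ with bias exactly $\delta^d$. The base case of a leaf is immediate: $g_{T'}$ is a known constant $c$, so Alice outputs $c$ and Bob outputs $0$, giving pointwise bias $1=\delta^0$.

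The inductive step is the heart of the construction and adapts Paw\l owski et~al.'s information-causality trick. At a node where Alice sends the bit $M(x)$ branching into subtrees $T'_0,T'_1$ of depth $d-1$, I recursively simulate both subtrees on disjoint sets of nonlocal boxes, obtaining $(\alpha_0,\beta_0)$ and $(\alpha_1,\beta_1)$. I then introduce one fresh isotropic CHSH box: Alice inputs $M(x)$ and Bob inputs $\beta_0\oplus\beta_1$, obtaining outputs $a',b'$ with $a'\oplus b'=M(x)\wedge(\beta_0\oplus\beta_1)$ up to a bias-$\delta$ error $\epsilon_{\mathrm{top}}$. Set $\alpha_{T'}:=\alpha_{M(x)}\oplus a'$ and $\beta_{T'}:=\beta_0\oplus b'$. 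Using the identity $\beta_0\oplus(M(x)\wedge(\beta_0\oplus\beta_1))=\beta_{M(x)}$, a short calculation gives $\alpha_{T'}\oplus\beta_{T'}=g_{T'}(x,y)\oplus\eta_{M(x)}\oplus\epsilon_{\mathrm{top}}$, where $\eta_{M(x)}$ is the error of the on-path subtree simulation (bias $\delta^{d-1}$) and $\epsilon_{\mathrm{top}}$ is the fresh-box error (bias $\delta$). Since the two errors are independent, the pointwise bias is $\delta^d$. A symmetric construction with the roles of Alice and Bob swapped handles nodes where Bob sends.

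Applying the construction to the root (depth $C$) produces an XOR-sharing of the function $g$ computed by $P$ with uniform pointwise bias $\delta^C$. Hence the overall XOR bias for $f$ on $\mu$ is $\mathbb{E}_\mu[(-1)^{g\oplus f}]\cdot\delta^C\ge\rho\delta^C$, where the inequality uses the accuracy of $P$. The main conceptual obstacle I anticipate is verifying that the off-path subtree simulation contributes no noise to the final XOR: although Alice uses $\alpha_{M(x)}$ while Bob uses $\beta_0$, which come from distinct sub-simulations and would otherwise be uncorrelated, the fresh CHSH box is calibrated precisely so that its output converts Bob's $\beta_0$ into the matching share $\beta_{M(x)}$, and the off-path randomness cancels in the final XOR. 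This cancellation, inherited from~\cite{pawlowski2009information}, is what lets the error accumulate only along the execution path and yields the exponent $C$ rather than a worse exponent reflecting the total number of nonlocal boxes used.
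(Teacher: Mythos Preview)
Your proposal is correct and follows essentially the same approach as the paper: an induction on the protocol tree in which both branches are recursively simulated on disjoint nonlocal boxes, with one fresh isotropic box per node implementing the $\mathrm{Addr}_1$ selection so that only the on-path error survives and the bias decays as $\delta^{C}$. The paper isolates the zero-error case as Proposition~\ref{prop:main} and then multiplies by $\rho$, while you pad to uniform depth and track the pointwise bias directly, but the construction, the use of Paw\l owski et~al.'s trick, and the key cancellation are identical.
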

Theorem~\ref{thm:main0} seems to be similar to Fact~\ref{fa:briet1}.
Theorem~\ref{thm:main0} and Fact~\ref{fa:briet1} are generalizations of different types of Fact~\ref{fa:briet}.
In Fact~\ref{fa:briet1}, the communication complexities are replaced by those in stronger theories while the bases in the second factor remain $1/2$.
On the other hand, 
in Theorem~\ref{thm:main0}, the base $1/2$ of the exponent $C_\rho(f, \mu)$ is improved to $\delta$ while the communication complexity remains deterministic.
Especially for $\delta=1/\sqrt{2}$, we obtain
\begin{align*}
C_{\rho}(f,\mu) &\ge  2\log\frac{\rho}{\beta^*(f,\mu)}
\end{align*}
from $\beta^*(f,\mu)\ge \beta_{\mathrm{NL}(1/\sqrt{2})}(f,\mu)$.
From the perfect parallel repetition theorem in quantum physics~\cite{Cleve2008, lee2008direct}, the following corollary is obtained.
\begin{corollary}\label{cor:main}
For any $\ell\in\mathbb{N}$ and $\rho\in(0,1]$,
\begin{align*}
\frac1\ell C_{\rho^\ell}(f^{\oplus\ell},\mu^{\otimes\ell}) &\ge  2\log\frac{\rho}{\beta^*(f,\mu)}.
\end{align*}
\end{corollary}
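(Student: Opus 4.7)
\textbf{Proof plan for Corollary~\ref{cor:main}.} The plan is to combine the specialization of Theorem~\ref{thm:main0} at $\delta = 1/\sqrt{2}$ (which, as already noted in the paragraph preceding the corollary, yields $C_\rho(g,\nu) \ge 2\log(\rho/\beta^*(g,\nu))$ via $\beta^*(g,\nu) \ge \beta_{\mathrm{NL}(1/\sqrt{2})}(g,\nu)$) with the perfect parallel repetition theorem for quantum XOR games from Cleve et al.~\cite{Cleve2008,lee2008direct}.

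First, I would instantiate the specialized bound with $g = f^{\oplus\ell}$, $\nu = \mu^{\otimes\ell}$ and error parameter $\rho^\ell$, giving
\begin{equation*}
C_{\rho^\ell}(f^{\oplus\ell},\mu^{\otimes\ell}) \;\ge\; 2\log\frac{\rho^\ell}{\beta^*(f^{\oplus\ell},\mu^{\otimes\ell})}.
\end{equation*}
Next, I would invoke the perfect parallel repetition theorem in the form $\beta^*(f^{\oplus\ell},\mu^{\otimes\ell}) = \beta^*(f,\mu)^\ell$. Substituting and using $\log(a^\ell/b^\ell) = \ell\log(a/b)$ yields
\begin{equation*}
C_{\rho^\ell}(f^{\oplus\ell},\mu^{\otimes\ell}) \;\ge\; 2\ell\log\frac{\rho}{\beta^*(f,\mu)},
\end{equation*}
and dividing through by $\ell$ gives the claimed bound.

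There is essentially no obstacle here: both ingredients are already in place. The only substantive observation is that the error parameter must be chosen consistently with the repetition, namely $\rho^\ell$ on the left-hand side so that the numerator inside the logarithm becomes $\rho^\ell$, matching the $\ell$-th power coming out of the parallel repetition on the right. The strength of the corollary is entirely inherited from the constant factor $2$ in Theorem~\ref{thm:main0}, which is the nontrivial improvement over the naive $\log(1/\beta^*(f,\mu))$ bound obtainable from Fact~\ref{fa:briet1} alone.
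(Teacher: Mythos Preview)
Your proposal is correct and matches the paper's approach exactly: the paper derives the corollary by specializing Theorem~\ref{thm:main0} at $\delta=1/\sqrt{2}$ (using $\beta^*\ge\beta_{\mathrm{NL}(1/\sqrt{2})}$) and then applying the perfect parallel repetition theorem $\beta^*(f^{\oplus\ell},\mu^{\otimes\ell})=\beta^*(f,\mu)^\ell$, just as you do.
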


The following theorem with Corollary~\ref{cor:main} gives Theorem~\ref{thm:exm}.
\begin{theorem}\label{thm:main1}
\begin{align*}
\lim_{n\to\infty} \min_{\mu}\beta^*(\mathrm{EQ}_n,\mu) &= \frac13\\
\lim_{n\to\infty} \min_{\nu}\beta^*(\mathrm{EQ}_n,\nu^{\otimes n}) &= 2\sqrt{3}-3
\end{align*}
where $\mu$ and $\nu$ are probability distributions on $\{0,1\}^n\times\{0,1\}^n$ and $\{0,1\}\times\{0,1\}$, respectively.
\end{theorem}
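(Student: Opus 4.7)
The plan is to sandwich each limit from both sides: construct an explicit quantum strategy witnessing ``$\ge$'' and match it with an upper bound that rules out anything larger.

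For the first identity, the workhorse is the minimax formula $\min_\mu \beta^*(f, \mu) = \max_{u, v}\min_{X, Y}(-1)^{f(X, Y)}\langle u_X, v_Y\rangle$, which holds because $\beta^*(f, \mu)$ is linear in $\mu$ and is a maximum over a convex (Gram-matrix) set of strategies. For ``$\ge 1/3$'' I would pick an orthonormal family $\{e_X\}_{X \in \{0,1\}^n}$ orthogonal to a unit vector $e_0$ and set $u_X := (e_0 + \sqrt 2\,e_X)/\sqrt 3$, $v_X := (e_0 - \sqrt 2\,e_X)/\sqrt 3$. A direct computation gives $\langle u_X, v_X\rangle = -1/3$ and $\langle u_X, v_Y\rangle = 1/3$ for $X \ne Y$, so $(-1)^{\mathrm{EQ}_n(X, Y)}\langle u_X, v_Y\rangle = 1/3$ uniformly and the XOR bias is exactly $1/3$ on every $\mu$. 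For the matching upper bound I would introduce, for any strategy $(u, v)$ with $-\langle u_X, v_X\rangle \ge \beta$ and $\langle u_X, v_Y\rangle \ge \beta$ ($X \ne Y$), the test vector $w_X := -v_X + \frac{2}{N-1}\sum_{Y \ne X}v_Y$ with $N = 2^n$. Combining the two sets of constraints yields $\langle u_X, w_X\rangle \ge 3\beta$, hence $\|w_X\|^2 \ge 9\beta^2$ by Cauchy--Schwarz; a short computation rewrites $\sum_X\|w_X\|^2 = N + 4N^2(1-\|\bar v\|^2)/(N-1)^2$ with $\bar v := N^{-1}\sum_Y v_Y$, and using $\|\bar v\|^2 \ge 0$ this yields $\beta \le (N+1)/(3(N-1))$, which tends to $1/3$.

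For the second identity the lower bound comes from a tensor-product strategy $u_X = \bigotimes_i u_{x_i}$, $v_Y = \bigotimes_i v_{y_i}$ with identical 2D factors chosen so that $\langle u_{x_i}, v_{y_i}\rangle$ is some $t \in [-1, 1]$ when $x_i = y_i$ and some $s \in [-1, 1]$ when $x_i \ne y_i$; the product structure of $\nu^{\otimes n}$ gives the closed form $\mathrm{bias} = [qt + (1-q)s]^n - 2(qt)^n$ with $q := \Pr_\nu[X_i = Y_i]$. Restricting $t$ to $\{\pm 1\}$ and letting $n \to \infty$ with $q^n = e^{-\gamma}$ held fixed, the best tensor bias attainable becomes $\max(|1 - 2 e^{-\gamma}|,\, 2 e^{-\gamma} - e^{-2\gamma})$, and elementary calculus shows this is minimized over $\gamma$ at $\gamma = \ln(2 + \sqrt 3)$, where both branches equal $2\sqrt 3 - 3$. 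For the matching upper bound I would pick the symmetric $\nu_n^*$ with $(q_n^*)^n = 2 - \sqrt 3$ and invoke Fourier analysis on $\{\pm 1\}^n$: the identity $(-1)^{\mathrm{EQ}_n(X, Y)} = 1 - 2^{1-n}\sum_S \chi_S(X)\chi_S(Y)$, combined with the computation $\mathbb{E}_{\nu^{\otimes n}}[\chi_S(X)\chi_T(Y)] = [S = T](2q-1)^{|S|}$ (valid because $\nu^*$ is symmetric, so $\mathbb{E}_\nu[X] = \mathbb{E}_\nu[Y] = 0$), gives
\[
\mathrm{bias} = \sum_{S \subseteq [n]}\bigl((2q-1)^{|S|} - 2 q^n\bigr)\langle \widehat{U}(S), \widehat{V}(S)\rangle,
\]
where $\widehat U(S), \widehat V(S)$ are the vector-valued Fourier coefficients of $X \mapsto u_X$ and $Y \mapsto v_Y$. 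Cauchy--Schwarz together with Parseval $\sum_S\|\widehat U(S)\|^2 = \mathbb{E}_X\|u_X\|^2 = 1$ collapses the sum to the scalar bound $|\mathrm{bias}| \le \max_k|(2q-1)^k - 2q^n|$; at $q = q_n^*$ this maximum is attained at the endpoints $k = 0$ and $k = n$, each approaching $2\sqrt 3 - 3$, while intermediate values are strictly smaller since $(2q-1)^k$ is monotone in $k$. Asymmetric $\nu$ reduces to the symmetric case by convexity of $\mu \mapsto \beta^*(f, \mu)$ together with the $(X, Y) \mapsto (-X, -Y)$ invariance of $\mathrm{EQ}_n$.

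The main subtlety in both upper bounds is finding the correct test object. For the first identity the naive choice $w_X = -v_X$ yields only $\beta \le 1/2$; the factor-of-three improvement hinges on combining $-v_X$ with the off-diagonal average so that both sets of constraints contribute simultaneously to $\langle u_X, w_X\rangle$. For the second identity, the collapse of the Fourier-weighted bias to a scalar maximum rests on a double diagonality: $(-1)^{\mathrm{EQ}_n(X, Y)}$ is a function of the componentwise product $X \cdot Y$ alone, so its $(X, Y)$-Fourier spectrum sits on $S = T$, and symmetric $\nu^{\otimes n}$ kills all off-diagonal moments $\mathbb{E}[\chi_S(X)\chi_T(Y)]$ for $S \ne T$. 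These two coincidences are what drive the clean limiting value $2\sqrt 3 - 3$.
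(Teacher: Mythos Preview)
Your argument is correct but takes a different route from the paper for the first identity. The paper never works with the Tsirelson SDP directly; instead it invokes the Linden et~al.\ result that $\beta^*(g^\oplus,2^{-2n}q^\oplus)=\beta(g^\oplus,2^{-2n}q^\oplus)$ for XOR functions on XOR distributions, together with the Fourier identity $\beta(g^\oplus,2^{-2n}q^\oplus)=\|\widehat{gq}\|_\infty$. The paper's lower bound $\ge 1/3$ is then a \emph{classical} protocol mixture (with probability $1/3$ output the constant~$1$, with probability $2/3$ output $\langle x,r\rangle$ and $\langle y,r\rangle$ for shared random $r$), and its upper bound $\le 1/3+o(1)$ comes from an explicit distribution supported on Hamming weights $0$ and $2$ of $x\oplus y$. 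Your approach bypasses the quantum-equals-classical reduction entirely: the explicit vectors $u_X,v_X$ and the test vector $w_X=-v_X+\tfrac{2}{N-1}\sum_{Y\ne X}v_Y$ give a self-contained SDP primal/dual pair, yielding the clean finite-$n$ bound $\min_\mu\beta^*(\mathrm{EQ}_n,\mu)\le (N+1)/(3(N-1))$. The paper's approach is more systematic (it produces a general formula applicable to any XOR function), while yours is more elementary and avoids the external Linden et~al.\ input.

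For the second identity the two approaches essentially coincide. Your Fourier computation of the quantum bias, bounding $\sum_S|\langle\widehat U(S),\widehat V(S)\rangle|\le 1$ via Cauchy--Schwarz and Parseval, is exactly the paper's Appendix~A proof of $\beta^*=\beta$ for XOR functions specialized to $\mathrm{OR}_n$, and the resulting scalar optimization $\max_k|(2q-1)^k-2q^n|$ is the same expression the paper analyzes. Your tensor-product lower bound strategies with $t,s\in\{\pm1\}$ are in fact classical (they correspond to constant output and to parity of $x\oplus y$), paralleling the paper's use of classical protocols; the paper phrases this uniformly via $\|\widehat{gq}\|_\infty$ rather than separating lower and upper bounds.
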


Furthermore, by applying generalized discrepancy theory~\cite{briet2013multipartite, TCS-040},
we can obtain lower bounds for randomized communication complexity.
\begin{theorem}\label{thm:main2}
For any $h\colon\{0,1\}^n\times\{0,1\}^n\to\{0,1\}$, $\epsilon\in[0,1/2]$, and $\delta\in[0,1/2]$,
\begin{equation}\label{eq:main2}
\beta_{\mathrm{NL}(\delta)}(h,\mu) \ge \delta^{R_{\epsilon}(f)}\left((1-\epsilon)\mathbb{E}_{(x,y)\sim \mu}[f(x,y)h(x,y)]-\epsilon\right).
\end{equation}
Hence, if $(1-\epsilon)\mathbb{E}_{(x,y)\sim \mu}[f(x,y)h(x,y)]-\epsilon > 0$,
\begin{align*}
R_\epsilon(f)\ge \frac1{\log\delta^{-1}}  \log\max_{h, \mu}\frac{(1-\epsilon)\mathbb{E}_{(x,y)\sim \mu}[f(x,y)h(x,y)]-\epsilon}{\beta_{\mathrm{NL}(\delta)}(h,\mu)}.
\end{align*}
\end{theorem}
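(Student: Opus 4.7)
The plan is to combine the nonlocal-box simulation underlying Theorem~\ref{thm:main0} with a generalized-discrepancy twist: instead of using a protocol for $f$ to play the XOR game for $f$, I use it to play the XOR game for the correlated target $h$. Concretely, take a randomized protocol $P$ for $f$ of cost $R_\epsilon(f)$ with worst-case error at most $\epsilon$, and let $P_r$ denote the deterministic protocol obtained by fixing the public random coins $r$. Applying the Paw\l owski-style simulation behind Theorem~\ref{thm:main0} to each $P_r$, every communication bit is replaced by one call to an isotropic nonlocal box of CHSH bias $\delta$, leaving Alice with a bit $a$ and Bob with a bit $b$ satisfying $\Pr[a \oplus b = P_r(x,y)] = (1 + \delta^{R_\epsilon(f)})/2$. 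Alice and Bob output $a$ and $b$ as their answers in the XOR game for $h$; averaging over $(x,y)\sim\mu$ and $r$, the same case analysis as in Fact~\ref{fa:briet} yields
\begin{equation*}
\beta_{\mathrm{NL}(\delta)}(h,\mu) \;\ge\; \delta^{R_\epsilon(f)} \, \mathbb{E}_{(x,y)\sim\mu}\mathbb{E}_r[P_r(x,y)\,h(x,y)],
\end{equation*}
where all Boolean values are interpreted in the $\pm 1$ convention.

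The second step compares this correlation to $\mathbb{E}_\mu[fh]$. Writing $\mathbb{E}_r[P_r(x,y)] = (1 - 2 e(x,y))\, f(x,y)$ with $e(x,y) := \Pr_r[P_r(x,y)\neq f(x,y)] \le \epsilon$, one has
\begin{equation*}
\mathbb{E}_\mu\!\left[\mathbb{E}_r[P_r]\cdot h\right] \;=\; \mathbb{E}_\mu[fh] \;-\; 2\,\mathbb{E}_\mu[e\cdot fh].
\end{equation*}
To obtain the tight constant in~\eqref{eq:main2}, I avoid the crude triangle-inequality bound $|\mathbb{E}[e\cdot fh]|\le \epsilon$ and instead use the one-sided estimate $\mathbb{E}_\mu[e\cdot fh] \le \mathbb{E}_\mu[e\cdot \mathbb{1}_{fh=1}] \le \epsilon \Pr_\mu[fh=1] = \tfrac{\epsilon}{2}(1+\mathbb{E}_\mu[fh])$, which uses only that $e\ge 0$. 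This rearranges to $\mathbb{E}_\mu[\mathbb{E}_r[P_r]\cdot h] \ge (1-\epsilon)\mathbb{E}_\mu[fh] - \epsilon$, and combining with the first step establishes~\eqref{eq:main2}. The lower bound on $R_\epsilon(f)$ then follows by taking logarithms and maximizing over $h$ and $\mu$, provided the right-hand side is positive.

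I expect the main obstacle to be purely in the first step: justifying that the Paw\l owski-style nonlocal-box simulation of a general two-way protocol really produces a bias that compounds multiplicatively as one factor of $\delta$ per bit of communication, even in the presence of branching transcripts and fixed public randomness. However, this is exactly the content of Theorem~\ref{thm:main0}, which is available as a black box. Granted that tool, the remainder is the short correlation-splitting computation above, whose only subtlety is to replace what would be a $2\epsilon$ loss with an $\epsilon(1+\mathbb{E}[fh])$ loss by exploiting the nonnegativity of the error function $e(x,y)$; this is precisely what turns an intuitive Buhrman-style bound into Linial and Shraibman's sharp factor.
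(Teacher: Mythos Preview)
Your proposal is correct and follows essentially the same route as the paper: apply the Paw\l owski-style construction from Proposition~\ref{prop:main} to a randomized protocol for $f$, then lower-bound the correlation $\mathbb{E}_\mu[\mathbb{E}_r[P_r]\cdot h]$ by $(1-\epsilon)\mathbb{E}_\mu[fh]-\epsilon$ via the one-sided estimate you give, which is exactly the paper's observation that ``the XOR of errors (b) and (c) is zero with probability at least $(1-\epsilon)\Pr_\mu[f=h]$''. One small caveat: the equality $\Pr[a\oplus b=P_r(x,y)]=(1+\delta^{R_\epsilon(f)})/2$ and the clean factoring of $\delta^{R_\epsilon(f)}$ out of the expectation require the \emph{pointwise} bias guarantee coming from the explicit construction in the proof of Proposition~\ref{prop:main} (with all transcripts padded to length exactly $R_\epsilon(f)$), not merely the averaged statement of Theorem~\ref{thm:main0} you cite as the black box---without padding the per-input bias is only $\ge\delta^{R_\epsilon(f)}$, which is not enough when $P_r(x,y)h(x,y)=-1$; the paper's own proof is equally terse on this point.
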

For $\delta=1/\sqrt{2}$, we obtain~Theorem~\ref{thm:ls},
which is exactly same as Linial and Shraibman's lower bound~\cite{linial2009lower, TCS-040},
\begin{align*}
R_\epsilon(f)\ge 2\log\gamma_2^{1/(1-2\epsilon)}-2\log\frac1{1-2\epsilon}
\end{align*}
where $\gamma_2^\alpha$ is some approximate norm of a communication matrix of $f$~\cite{linial2009lower, TCS-040}.
Theorem~\ref{thm:main2} gives an intuitive proof of Linial and Shraibman's lower bound
and generalizations of Linial and Shraibman's lower bound by using nonlocal boxes.
It is not necessarily easy to upper bound $\beta_{\mathrm{NL}(\delta)}(f,\mu)$.
However, similarly to the relationship $\beta^*(f,\mu)\ge \beta_{\mathrm{NL}(1/\sqrt{2})}(f,\mu)$,
some relaxation may give an upper bound of $\beta_{\mathrm{NL}(\delta)}(f,\mu)$.
Note that Brassard et al.\ showed that $\beta_{\mathrm{NL}(\delta)}(f)$ is lower bounded by a positive constant for any $f$ if $\delta>\sqrt{2/3}$~\cite{PhysRevLett.96.250401}.
In the area of foundation of quantum physics,
it is a well-known open problem of whether $\beta_{\mathrm{NL}(\delta)}(f)$ is lower bounded by a positive constant for any $f$ for $\delta\in(1/\sqrt{2},\sqrt{2/3}]$~\cite{RevModPhys.82.665, PhysRevA.94.052130}.

\section{Proofs of Theorem~\ref{thm:main0} and \ref{thm:main2}: Paw\l owski et al.'s protocol}
\subsection{Intuition on the proof of Theorem~\ref{thm:main0}}
The proof of Theorem~\ref{thm:main0} is the most important part in this paper.
In the proof of Fact~\ref{fa:briet}, a transcript of communication protocol is ``guessed'' by uniform random bits, which succeeds with probability at least $2^{-k}$ where $k$ denotes the length of the longest transcript.
In the proof of Theorem~\ref{thm:main0}, a correct transcript of communication protocol is ``selected'' by using isotropic nonlocal boxes with CHSH bias $\delta$.
This ``selection'' can be implemented by a chain of 1-bit selectors $\mathrm{Addr}_1(x_0,x_1,y):=x_y$, which has an XOR game protocol with bias $\delta$.
We obtain Theorem~\ref{thm:main0} by showing that the chain of 1-bit selectors of length $k$ has an XOR game protocol with bias at least $\delta^k$.
This protocol may be regarded as Paw\l owski et al.'s protocol for the pointer jumping function rather than the address function.

\subsection{Proof of Theorem~\ref{thm:main0}}
The main idea comes from Paw\l owski et al.'s protocol~\cite{pawlowski2009information}.
\begin{definition}
The address function $\mathrm{Addr}_n\colon \{0,1\}^{2^n}\times\{0,1\}^n\to\{0,1\}$ is defined by
\begin{equation*}
\mathrm{Addr}_n(x_0,\dotsc,x_{2^n-1},y_1,\dotsc,y_n)=x_y
\end{equation*}
where $y:=\sum_{i=1}^n y_i2^{i-1}$.
\end{definition}
\begin{lemma}\label{lem:addr}
\begin{equation*}
\beta_{\mathrm{NL}(\delta)}(\mathrm{Addr}_1) \ge \delta.
\end{equation*}
\begin{proof}
According to the equation
\begin{equation*}
\mathrm{Addr}_1(x_0, x_1, y_0) = x_0 \oplus y_0(x_0\oplus x_1)
\end{equation*}
Alice and Bob put $x_0\oplus x_1$ and $y_0$ into a nonlocal box and get $a_0$ and $b_0$, respectively.
Then, set $a=x_0\oplus a_0$ and $b=b_0$. This protocol has bias $\delta$.
\end{proof}
\end{lemma}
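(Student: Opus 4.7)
The plan is to exploit the algebraic identity
\begin{equation*}
\mathrm{Addr}_1(x_0,x_1,y_0) \;=\; x_0 \,\oplus\, \bigl(y_0\wedge(x_0\oplus x_1)\bigr),
\end{equation*}
which rewrites the address function as an XOR of a bit that Alice can compute on her own, together with a term whose only nonlocal content is an AND of one bit held by Alice and one bit held by Bob. Since an isotropic $\mathrm{NL}(\delta)$ box is, by definition, a device that produces outputs $(a_0,b_0)$ satisfying $a_0\oplus b_0 = x\wedge y$ with probability $(1+\delta)/2$ on arbitrary inputs $x,y$, this decomposition strongly suggests reducing the XOR game for $\mathrm{Addr}_1$ to a single use of such a box.

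Concretely, I would propose the following protocol. Alice locally computes $x_0\oplus x_1$ and feeds it into her port of a single $\mathrm{NL}(\delta)$ box, obtaining $a_0$; Bob feeds his bit $y_0$ and obtains $b_0$. Alice then outputs $a:=x_0\oplus a_0$ and Bob outputs $b:=b_0$. A direct computation yields
\begin{equation*}
a\oplus b \;=\; x_0\oplus a_0 \oplus b_0,
\end{equation*}
and whenever the box behaves correctly, i.e.\ $a_0\oplus b_0 = y_0\wedge(x_0\oplus x_1)$, this equals $x_0\oplus y_0(x_0\oplus x_1) = x_{y_0} = \mathrm{Addr}_1(x_0,x_1,y_0)$. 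The correctness event has probability $(1+\delta)/2$ for every choice of $(x_0,x_1,y_0)$, so the winning probability of the XOR game is at least $(1+\delta)/2$ pointwise, and hence also under the worst-case input distribution $\mu$. This gives a bias of at least $\delta$, which is the desired bound.

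The only nontrivial step is spotting the decomposition $x_{y_0} = x_0\oplus y_0(x_0\oplus x_1)$ and recognizing that its sole nonlocal ingredient is precisely the AND that CHSH-type boxes are tailored to resolve; everything after that is a one-line verification. In other words, I do not expect a genuine obstacle here — the lemma is really just an observation that $\mathrm{Addr}_1$ is CHSH in disguise, which is the same trick underlying Paw\l owski et al.'s information-causality protocol and which will serve as the base case for the chaining argument sketched in the intuition preceding the proof.
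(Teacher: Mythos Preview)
Your proposal is correct and is essentially the same argument as the paper's: you use the identity $\mathrm{Addr}_1(x_0,x_1,y_0)=x_0\oplus y_0(x_0\oplus x_1)$, feed $x_0\oplus x_1$ and $y_0$ into a single isotropic box, and output $a=x_0\oplus a_0$, $b=b_0$. Your write-up is slightly more explicit in verifying correctness and in noting that the bias is achieved pointwise (hence for every input distribution), but the approach is identical.
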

Paw\l owski et al.\ showed $\beta_{\mathrm{NL}(\delta)}(\mathrm{Addr}_n) \ge \delta^n$ by the iterative application of Lemma~\ref{lem:addr}~\cite{pawlowski2009information, PhysRevA.94.052130}.
For the proof of Theorem~\ref{thm:main0}, we first show the following proposition.
\begin{proposition}\label{prop:main}
Let $X$ and $Y$ be finite sets.
For any $f\colon X\times Y\to \{0,1\}$ and probability distribution $\mu$ on $X\times Y$,
\begin{equation*}
\beta_{\mathrm{NL}(\delta)}(f,\mu) \ge \delta^{C_{1}(f,\mu)}.
\end{equation*}
\end{proposition}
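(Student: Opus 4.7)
The plan is to prove Proposition~\ref{prop:main} by induction on $k:=C_1(f,\mu)$, the depth of a zero-error deterministic protocol for $(f,\mu)$. For the base case $k=0$, no communication occurs, so the support of $\mu$ lies in a single combinatorial rectangle on which $f$ admits a decomposition $f(x,y)=A(x)\oplus B(y)$; having Alice output $A(x)$ and Bob output $B(y)$ wins the XOR game with bias $1=\delta^0$.

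For the inductive step, fix a depth-$k$ protocol tree for $(f,\mu)$ and assume without loss of generality that Alice speaks first, sending $t_1=\phi_1(x)\in\{0,1\}$ and descending into one of two depth-$(k-1)$ subtrees $\Pi_0,\Pi_1$. Each $\Pi_c$ is a zero-error protocol for $f$ on the conditional distribution $\mu_c$ supported on $\{\phi_1(x)=c\}$, so by the inductive hypothesis there exist NL-box XOR-game protocols $\widetilde{\Pi}_0,\widetilde{\Pi}_1$ built on disjoint sets of fresh nonlocal boxes that produce outputs $(A_c,B_c)$ with $A_c\oplus B_c=f(x,y)\oplus E_c$, where $E_c$ has bias $\delta^{k-1}$ under $\mu_c$. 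Alice and Bob run both subprotocols in parallel, yielding $(A_0,A_1)$ for Alice and $(B_0,B_1)$ for Bob; Alice can compute $A_{t_1}$ directly, but Bob must contribute a share of $B_{t_1}$ without knowing $t_1$. For this I would apply Lemma~\ref{lem:addr} with the roles of Alice and Bob swapped, which is valid by the left-right symmetry of isotropic nonlocal boxes: using one additional fresh nonlocal box with Alice's input $t_1$ and Bob's input $B_0\oplus B_1$, they obtain outputs $\alpha,\gamma$ with $\alpha\oplus\gamma=t_1\wedge(B_0\oplus B_1)\oplus E_{\mathrm{addr}}$, where $E_{\mathrm{addr}}$ has pointwise bias $\delta$. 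Alice then outputs $a:=A_{t_1}\oplus\alpha$ and Bob outputs $b:=B_0\oplus\gamma$.

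A direct expansion gives $a\oplus b=A_{t_1}\oplus B_{t_1}\oplus E_{\mathrm{addr}}=f(x,y)\oplus E_{t_1}\oplus E_{\mathrm{addr}}$. Because $E_{t_1}$ depends only on the nonlocal boxes of $\widetilde{\Pi}_{t_1}$ while $E_{\mathrm{addr}}$ depends only on the additional address box, the two errors are independent given $(x,y)$; combined with the pointwise bias $\delta$ of $E_{\mathrm{addr}}$, the standard multiplicativity of XOR-bias under independence yields, after averaging over $\mu$, an overall bias of $\delta\cdot\delta^{k-1}=\delta^k$, closing the induction. The subtlety that requires care is this independence bookkeeping — allocating disjoint nonlocal-box resources to $\widetilde{\Pi}_0$, $\widetilde{\Pi}_1$, and the address step so that the errors genuinely combine multiplicatively — together with a clean use of the Alice/Bob symmetry of isotropic boxes to justify the role-swapped address protocol. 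The substantive conceptual point, which is the novelty of the paper, is the realization that a depth-$k$ protocol tree can be simulated as a length-$k$ chain of 1-bit selectors, paying only a multiplicative factor $\delta$ per round rather than the factor $1/2$ of the classical discrepancy bound.
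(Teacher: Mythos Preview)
Your proof is correct and follows essentially the same inductive argument as the paper's: both decompose a depth-$k$ protocol at the first bit, run the two depth-$(k-1)$ subprotocols in parallel via the inductive hypothesis, and then combine their outputs using one additional nonlocal box for the role-swapped $\mathrm{Addr}_1$ selector, picking up a single factor of $\delta$. The only cosmetic difference is that you apply the induction to $(f,\mu_c)$ on the conditional distributions, whereas the paper extends each subprotocol's computed function $\bar f_c$ to all of $X\times Y$ and applies the induction to $(\bar f_c,\mu)$; these are equivalent bookkeeping choices.
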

\begin{proof}
We show the proposition by the induction on the communication complexity.
If $C_1(f,\mu)=0$, the proposition trivially holds.
Assume that the proposition holds for $C_1(f,\mu)\le k$.
Let $P$ be the communication protocol for $f$ with the transcript length at most $k+1$.
Assume that at the first step of $P$, Alice sends a bit $A_1(x)$ to Bob.
Let $f_0$ and $f_1$ be the restrictions of $f$ to the rectangles $A_1^{-1}(0)\times Y$ and $A_1^{-1}(1)\times Y$, respectively.
Then, the communication complexity of $f_0$ and $f_1$ are at most $k$.
We can extend the domains of $f_0$ and $f_1$ to $X\times Y$ while the values on the original domains and communication complexities are preserved.
The extended functions are denoted by $\bar{f_0}$ and $\bar{f_1}$.
Then,
\begin{equation*}
f(x,y) = \mathrm{Addr}_1(\bar{f_0}(x,y), \bar{f_1}(x,y), A_1(x)).
\end{equation*}
From the hypothesis of the induction,
there are XOR game protocols for $\bar{f_0}$ and $\bar{f_1}$ with bias at least $\delta^k$.
By applying the XOR game protocols to $\bar{f_0}$ and $\bar{f_1}$, Alice and Bob gets $(a_0, a_1)$ and $(b_0, b_1)$, respectively, such that
$a_0\oplus b_0=f_0(x,y)$ and $a_1\oplus b_1=f_1(x,y)$ with bias $\delta^k$.
Let $e_i:=a_i\oplus b_i \oplus f_i(x,y)$ for $i=0,1$.
Then, we obtain
\begin{align*}
f(x,y) &= \mathrm{Addr}_1(a_0\oplus b_0\oplus e_0, a_1\oplus b_1\oplus e_1, A_1(x))\\
&= 
a_{A_1(x)}
\oplus \mathrm{Addr}_1(b_0, b_1, A_1(x))
\oplus e_{A_1(x)}
\end{align*}
From Lemma~\ref{lem:addr}, Alice and Bob get $a'$ and $b'$ such that $a'\oplus b'=\mathrm{Addr}_1(b_0, b_1, A_1(x))$ with bias $\delta$.
Let $e':=a'\oplus b'\oplus \mathrm{Addr}_1(b_0, b_1, A_1(x))$.
Then,
\begin{align*}
f(x,y) &= 
(a_{A_1(x)} \oplus a') \oplus b'
\oplus (e' \oplus e_{A_1(x)}).
\end{align*}
Let $a:=(a_{A_1(x)} \oplus a')$ and $b:=b'$ be Alice and Bob's final output for the XOR game.
This protocol has bias at least $\delta^{k+1}$.
\end{proof}

We can now straightforwardly show Theorem~\ref{thm:main0}.
\begin{proof}[The proof of Theorem~\ref{thm:main0}]
A deterministic communication protocol corresponding to $C_\rho(f, \mu)$ computes some function $f'$ without error such that $\mathbb{E}_{(x,y)\sim\mu}[f(x,y)\allowbreak f'(x,y)]\ge\rho$.
Hence, the XOR game protocol in Proposition~\ref{prop:main} for $f'$ has bias at least $\rho\delta^{C_\rho(f,\mu)}$ for $f$.
\end{proof}

\begin{remark}
The proof of Proposition~\ref{prop:main} cannot be generalized for quantum communication complexity nor communication complexity with nonlocal boxes straightforwardly since
after Alice uses a nonlocal box, Bob can use the same nonlocal box at most once.
We can generalize the above results to restricted protocols in which Alice and Bob must use common nonlocal boxes (quantum states) in common round.
However, this seems to be restrictive since the standard quantum teleportation is not allowed in the restricted protocols.
\end{remark}

\subsection{Proof of Theorem~\ref{thm:main2}}
Theorem~\ref{thm:main2} is obtained by Proposition~\ref{prop:main} with Buhrman's idea for generalized discrepancy theory~\cite{TCS-040, briet2013multipartite}.
First, we apply the protocol in Proposition~\ref{prop:main} to $f$ by using a randomized communication protocol corresponding to $R_\epsilon(f)$.
This protocol computes $h$ with an XOR of three errors: (a) Error in the computation of the chain of $\mathrm{Addr}_1$ in Proposition~\ref{prop:main}.
(b) Error of the randomized protocol for $f$.
(c) Error from the incoincidence of $f$ and $h$.
The bias of these errors are $\delta^{R_\epsilon(f)}$, $1-2\epsilon$ and $\mathbb{E}_{(x,y)\sim\mu}[f(x,y)h(x,y)]$, respectively.
The error (a) is independent of errors (b) and (c). However, errors (b) and (c) are not independent.
The XOR of errors (b) and (c) is zero with probability at least $\Pr_{(x,y)\sim\mu}(f(x,y) = h(x,y))(1-\epsilon)$, which corresponds to a bias
$(1-\epsilon)\mathbb{E}_{(x,y)\sim\mu}[f(x,y)h(x,y)] - \epsilon$.
Hence, we obtain~\eqref{eq:main2}.

\section{Proof of Theorem~\ref{thm:main1}: Bias of XOR game for XOR functions}
\subsection{XOR game for XOR functions}
Let $g^\oplus(x, y) := g(x\oplus y)$ for any $g\colon \{0,1\}^n\to\{0,1\}$.
A function in this form is called an XOR function.
Let $q\colon\{0,1\}^n\to\mathbb{R}$ be a non-negative function with $\sum_{z} q(z) =2^n$.
The largest bias of XOR game of XOR function $g^\oplus$ on XOR input distribution $2^{-2n}q^\oplus$ can be represented by the largest Fourier amplitude of $g(z)q(z)$.
\begin{lemma}
\begin{equation*}
\beta(g^\oplus, 2^{-2n}q^\oplus) = \max_{S\subseteq[n]} |\widehat{gq}(S)|
=\|\widehat{gq}\|_\infty
\end{equation*}
where $(gq)(z):=g(z)q(z)$.
\end{lemma}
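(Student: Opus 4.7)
My plan is to expand the bias in the $\pm 1$ Fourier basis and recognise it as a bilinear form in the Fourier coefficients of Alice's and Bob's strategies. Since the bias of an XOR game is linear in the two sides' (mixed) strategies, the optimum is attained on deterministic strategies $a,b\colon\{0,1\}^n\to\{0,1\}$. Passing to the $\pm 1$ encoding $A:=(-1)^{a}$, $B:=(-1)^{b}$, $G:=(-1)^{g}$, the bias becomes
\[
\beta = \mathbb{E}_{(x,y)\sim 2^{-2n}q^{\oplus}}\bigl[A(x)B(y)G(x\oplus y)\bigr] = 2^{-2n}\sum_{x,y}q(x\oplus y)\,A(x)B(y)G(x\oplus y).
\]

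Next I would change variables $z:=x\oplus y$, exploiting the XOR-structure of the distribution to obtain
\[
\beta = 2^{-n}\sum_{z}q(z)G(z)\,\mathbb{E}_x\bigl[A(x)B(x\oplus z)\bigr].
\]
Expanding $A(x)=\sum_S\widehat{A}(S)\chi_S(x)$ and using $\chi_T(x\oplus z)=\chi_T(x)\chi_T(z)$ so that $B(x\oplus z)=\sum_T\widehat{B}(T)\chi_T(x)\chi_T(z)$, the orthogonality $\mathbb{E}_x[\chi_S(x)\chi_T(x)]=\mathbf{1}[S=T]$ collapses the inner expectation to $\sum_S\widehat{A}(S)\widehat{B}(S)\chi_S(z)$. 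Substituting back, the $z$-sum $2^{-n}\sum_z q(z)G(z)\chi_S(z)$ is exactly $\widehat{gq}(S)$ in the paper's convention (where $g$ inside the Fourier coefficient denotes the $\pm 1$ encoding $G$), yielding the bilinear form
\[
\beta = \sum_{S\subseteq[n]}\widehat{A}(S)\,\widehat{B}(S)\,\widehat{gq}(S).
\]

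The upper bound then follows by pulling $\|\widehat{gq}\|_\infty$ outside and applying Cauchy--Schwarz with Parseval:
\[
|\beta|\le \|\widehat{gq}\|_\infty\sum_S|\widehat{A}(S)\widehat{B}(S)|\le \|\widehat{gq}\|_\infty\sqrt{\sum_S\widehat{A}(S)^2}\sqrt{\sum_S\widehat{B}(S)^2}=\|\widehat{gq}\|_\infty,
\]
since $A,B\in\{\pm 1\}$ force $\sum_S\widehat{A}(S)^2=\mathbb{E}[A^2]=1$, and likewise for $B$. For the matching lower bound, pick $S^\star$ with $|\widehat{gq}(S^\star)|=\|\widehat{gq}\|_\infty$ and play the strategies $A(x):=\sign(\widehat{gq}(S^\star))\chi_{S^\star}(x)$, $B(y):=\chi_{S^\star}(y)$, which are legitimate $\pm 1$-valued outputs because characters take values in $\{\pm 1\}$; only the $S=S^\star$ term of the bilinear form survives, giving $\beta=|\widehat{gq}(S^\star)|$. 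No serious obstacle is anticipated; the only care needed is arranging the change of variables so that the $q$-weight and the $G$-factor fuse into a single Fourier coefficient of $gq$, after which the two bounds are one line of Cauchy--Schwarz and one line of picking the right character.
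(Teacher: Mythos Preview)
Your proof is correct and follows essentially the same route as the paper: reduce to deterministic strategies, expand the bias as $\sum_S \widehat{gq}(S)\widehat{A}(S)\widehat{B}(S)$, bound via Cauchy--Schwarz and Parseval, and achieve equality with the character $\chi_{S^\star}$. The only cosmetic differences are that you make the substitution $z=x\oplus y$ explicit before the Fourier expansion and place the sign correction on Alice rather than Bob.
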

\begin{proof}
When an input distribution is fixed, shared random bits do not help to increase the winning probability of an XOR game.
Hence, without loss of generality, we can assume that Alice and Bob output $a=A(x)$ and $b=B(y)$, respectively where $A$ and $B$ are deterministic boolean functions from $\{0,1\}^n$ to $\{0,1\}$.
Then, we obtain an upper bound of bias of an XOR game for a XOR function (similarly to the proof on the BLR test in~\cite{odonnell2014analysis})
\begin{align*}
\mathbb{E}\left[g(x\oplus y)q(x\oplus y) A(x) B(y)\right]
&= \sum_{S\subseteq [n]} \widehat{gq}(S)\widehat{A}(S)\widehat{B}(S)\\
&\le \max_{S\subseteq [n]} |\widehat{gq}(S)| \sum_{S\subseteq [n]} |\widehat{A}(S)\widehat{B}(S)|\\
&\le \max_{S\subseteq [n]} |\widehat{gq}(S)|.
\end{align*}
The last inequality is obtained by the Cauchy--Scwartz inequality and $\sum_{S\subseteq[n]} \widehat{A}(S)^2 = \sum_{S\subseteq [n]} \widehat{B}(S)^2 = 1$.
Let $S^*:=\mathrm{argmax}_{S\subseteq[n]} |\widehat{gq}(S)|$.
This upper bound can be achieved by $A(x)=\bigoplus_{i\in S^*} x_i$, $B(y)=\mathrm{sign}(\widehat{gq}(S^*))\oplus \bigoplus_{i\in S^*} y_i$ where
$\mathrm{sign}(x)$ is 0 if $x\ge0$ and 1 otherwise.
\end{proof}

For general input distribution, Alice and Bob easily make the distribution XOR by using shared random bits
since $g(x\oplus y)= g((x\oplus r)\oplus (y\oplus r))$.
The probability distribution of $(x', y'):=(x\oplus r, y\oplus r)$ is an XOR function.
In other word, the worst case input distribution must be an XOR function.

\begin{lemma}
\begin{equation*}
\beta(g^\oplus) = \min_{q} \|\widehat{gq}\|_\infty.
\end{equation*}
\end{lemma}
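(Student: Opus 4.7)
The plan is to reduce to the previous lemma by showing that the worst-case input distribution for an XOR function may be taken to be of XOR form, i.e., of the shape $\mu(x,y) = 2^{-2n} q(x\oplus y)$ for some non-negative $q$ with $\sum_z q(z) = 2^n$. Once this reduction is in place, the previous lemma identifies the bias with $\|\widehat{gq}\|_\infty$ and the identity follows by minimizing over $q$.

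First I would prove the easy direction $\beta(g^\oplus) \le \min_q \|\widehat{gq}\|_\infty$: XOR-form distributions are a subclass of all input distributions on $\{0,1\}^n\times\{0,1\}^n$, so taking the minimum of $\beta(g^\oplus, \mu)$ over this subclass can only be at least the minimum over all $\mu$. By the previous lemma this restricted minimum equals $\min_q \|\widehat{gq}\|_\infty$.

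For the reverse direction, I would use the symmetrization suggested in the paragraph preceding the statement. Given an arbitrary input distribution $\mu$, define the averaged distribution
\begin{equation*}
\mu'(x,y) := \frac{1}{2^n}\sum_{r\in\{0,1\}^n} \mu(x\oplus r, y\oplus r).
\end{equation*}
A change of variable shows that $\mu'(x,y)$ depends only on $x\oplus y$, so $\mu'$ has the form $2^{-2n} q^\oplus$ for some admissible $q$ (a short normalization check gives $\sum_z q(z) = 2^n$). The key claim is $\beta(g^\oplus, \mu) \ge \beta(g^\oplus, \mu')$: given inputs $(x,y)\sim\mu$, Alice and Bob sample shared randomness $r$ and internally replace their inputs by $(x\oplus r, y\oplus r)$, whose joint distribution is exactly $\mu'$; since $g((x\oplus r)\oplus(y\oplus r)) = g(x\oplus y)$, any XOR-game protocol achieving bias $\beta(g^\oplus, \mu')$ on $\mu'$ achieves the same bias on $\mu$. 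Applying the previous lemma to $\mu'$ yields $\beta(g^\oplus, \mu) \ge \|\widehat{gq}\|_\infty \ge \min_{q'}\|\widehat{gq'}\|_\infty$. Taking the minimum over $\mu$ completes the proof.

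The main bookkeeping point—and the only thing that requires care—is verifying that the averaging above really produces a valid XOR-form distribution of the precise shape $2^{-2n} q^\oplus$ with $\sum_z q(z) = 2^n$, so that the previous lemma applies verbatim. This is a one-line calculation but is essential for the two sides of the identity to match exactly.
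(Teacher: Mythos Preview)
Your proposal is correct and follows precisely the approach the paper indicates in the paragraph immediately preceding the lemma: symmetrize an arbitrary input distribution $\mu$ into an XOR-form distribution $\mu'=2^{-2n}q^\oplus$ via the shared shift $r$ (using $g((x\oplus r)\oplus(y\oplus r))=g(x\oplus y)$), conclude that the worst input distribution is of XOR form, and then invoke the previous lemma. The paper does not spell out the two inequalities separately, but your write-up is simply a more detailed version of the same argument.
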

\begin{remark}
If we consider worst boolean function $g$,
we obtain the lower bound
\begin{equation}
\min_{g,q}\max_{S\subseteq[n]} |\widehat{gq}(S)|
\ge
\sqrt{\frac1{2^n}\mathbb{E}[q(z)^2]}
\ge 2^{-\frac{n}2}
\label{eq:worst}
\end{equation}
from Parseval's identity $\sum_{S} \widehat{gq}(S)^2 = \mathbb{E}[q(z)^2]$.
This lower bound can be achieved by bent functions and the uniform input distribution~\cite{odonnell2014analysis}.
This lower bound $2^{-n/2}$ is not applicable for general non-XOR function.
If $n=1$ and $f(x,y)=x\wedge y$, i.e., CHSH game, the maximum bias $\beta(\mathrm{AND})=1/2<1/\sqrt{2}$.
For general non-XOR functions, Littlewood's 4/3 inequality gives a lower bound $2^{-(n+1)/2}$~\cite{doi:10.1063/1.4769269}.
\end{remark}

Linden et al.\ showed $\beta^*(g^\oplus, 2^{-2n}q^\oplus) = \beta(g^\oplus, 2^{-2n}q^\oplus)$ for any $g$ and $q$~\cite{PhysRevLett.99.180502}.
We give another proof using Tsirelson's characterization and Fourier analysis in Appendix~\ref{apx:qxor}.

\subsection{Equality function}
The negation of the equality function $\mathrm{EQ}_n$ is an XOR function $\mathrm{OR}_n^\oplus$.
\begin{lemma}
\begin{equation}
\lim_{n\to\infty}\min_{q}\left\|\widehat{\mathrm{OR}_nq}\right\|_\infty
=
\frac13.
\end{equation}
\end{lemma}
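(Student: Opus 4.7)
The plan is to match an explicit construction against a short dual argument, both converging to $1/3$. The starting point is that in the $\pm 1$ convention, $\mathrm{OR}_n(\mathbf{1}) = +1$ at the all-$+1$ vector $\mathbf{1}$ and $\mathrm{OR}_n(z) = -1$ for every $z \neq \mathbf{1}$, so $\mathrm{OR}_n$ has a trivial spike-plus-constant decomposition. Writing $\tau := q(\mathbf{1})/2^n \in [0,1]$, a one-line computation from the definition of Fourier coefficients yields
\begin{equation*}
\widehat{\mathrm{OR}_n q}(S) = 2\tau - \widehat{q}(S) \qquad \text{for every } S \subseteq [n],
\end{equation*}
so in particular $\widehat{\mathrm{OR}_n q}(\emptyset) = 2\tau - 1$ since $\widehat{q}(\emptyset) = \mathbb{E}[q] = 1$.

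For the upper bound I would take a ``spike plus uniform'' distribution, namely $q(\mathbf{1}) = 2^n/3$ and $q(z) = 2 \cdot 2^n / (3(2^n - 1))$ for $z \neq \mathbf{1}$, which is non-negative, has $\mathbb{E}[q] = 1$, and $\tau = 1/3$. Using $\sum_{z \neq \mathbf{1}} \prod_{i \in S} z_i = -1$ for each nonempty $S$, a direct calculation yields $\widehat{\mathrm{OR}_n q}(\emptyset) = -1/3$ and $\widehat{\mathrm{OR}_n q}(S) = (2^n + 1)/(3(2^n - 1))$ for every $S \neq \emptyset$; both quantities converge to $1/3$ as $n \to \infty$.

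For the lower bound, fix an arbitrary admissible $q$, set $c := \|\widehat{\mathrm{OR}_n q}\|_\infty$, and write $p := \mathrm{OR}_n q$. Two elementary linear constraints on $\tau$ suffice. First, the empty-set term gives $|2\tau - 1| \leq c$, hence $\tau \geq (1 - c)/2$. Second, Fourier inversion at the point $\mathbf{1}$ (where every character equals $1$) gives $p(\mathbf{1}) = q(\mathbf{1}) = 2^n \tau = \sum_S \widehat{p}(S)$, so $\sum_{S \neq \emptyset} \widehat{p}(S) = (2^n - 2)\tau + 1$; bounding each of the $2^n - 1$ summands by $c$ yields $\tau \leq ((2^n - 1) c - 1)/(2^n - 2)$. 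Combining these inequalities and solving for $c$ gives $c \geq 2^n/(3 \cdot 2^n - 4)$, which converges to $1/3$.

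The conceptual insight to spot is the duality between the two constraints: the $S = \emptyset$ Fourier coefficient pins $\tau$ from below, while evaluation at the unique point $\mathbf{1}$ where $\mathrm{OR}_n$ flips sign pins $\tau$ from above, and it is precisely the arithmetic of these two linear constraints on $\tau$ that forces the constant $1/3$. A minor technicality is that the upper and lower bounds differ by $O(2^{-n})$ at finite $n$, so the limit in the statement is essential; one does not expect equality for any specific $n$.
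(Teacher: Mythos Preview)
Your proof is correct but proceeds quite differently from the paper's on both halves. For the upper bound, the paper puts mass on Hamming weights $0$ and $2$ and optimizes a single parameter $\lambda$, obtaining $\|\widehat{\mathrm{OR}_n q}\|_\infty \le 1/3 + O(1/n)$; your spike-plus-uniform choice is simpler and in fact gives the sharper bound $(2^n+1)/(3(2^n-1)) = 1/3 + O(2^{-n})$. For the lower bound, the paper invokes the operational meaning of $\|\widehat{\mathrm{OR}_n q}\|_\infty$ as an XOR-game bias and exhibits an explicit randomized protocol (a $1/3$--$2/3$ mixture of the constant strategy and the random-inner-product strategy) achieving bias exactly $1/3$ on every input; you instead run a direct Fourier-analytic argument, squeezing $\tau$ between the $S=\emptyset$ coefficient and the inversion identity at the all-ones point. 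Your route is more elementary and yields the slightly stronger finite-$n$ bound $c \ge 2^n/(3\cdot 2^n - 4) > 1/3$, while the paper's protocol-based argument has the advantage of fitting the surrounding narrative (it is literally a dual witness in the minimax sense) and gives $\ge 1/3$ uniformly in $n$.
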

\begin{proof}
First, we show $\lim_{n\to\infty}\min_{q}\|\widehat{\mathrm{OR}_nq}\|_\infty\le 1/3$.
We consider an optimization of the input distribution $\mu(z):=2^{-n}q(z)$ only among a class
\begin{align*}
\mu(z)=\begin{cases}
\lambda,& w(z) = 0\\
\frac{1-\lambda}{\binom{n}{2}},& w(z) = 2\\
0,& \text{otherwise}
\end{cases}
\end{align*}
where $w(z)$ denotes the Hamming weight of $z\in\{0,1\}^n$ and $\lambda\in[0,1]$ is a parameter.
Then,
\begin{align*}
&\min_{q}\max_{S\subseteq[n]} \left|\mathbb{E}\left[\mathrm{OR}_n(z)q(z)\prod_{i\in S} z_i\right]\right|\\
&\le\min_{\lambda}\max_{S\subseteq[n]} \left| \lambda - (1-\lambda)\left(\frac{(n-|S|)(n-|S|-1)}{n(n-1)} - 2\frac{(n-|S|)|S|}{n(n-1)} + \frac{|S|(|S|-1)}{n(n-1)}\right)\right|\\
&=\min_{\lambda}\max_{k=0,1,\dotsc,n} \left| \lambda - (1-\lambda)\left(\frac{n^2-4nk+4k^2-n}{n(n-1)}\right)\right|\\
&=\min_{\lambda}\max_{k=0,1,\dotsc,n} \left| 2\lambda - 1 + (1-\lambda)\left(\frac{4k(n-k)}{n(n-1)}\right)\right|\\
&\le\min_{\lambda}\max \left\{1-2\lambda, 2\lambda - 1 + (1-\lambda)\frac{n}{n-1}\right\}\\
&= \min_{\lambda}\max \left\{1-2\lambda, \lambda+\Theta(n^{-1})\right\}
\end{align*}
By solving $1-2\lambda = \lambda$, we obtain $\lambda=1/3$.

Next, we show $\min_{q}\|\widehat{\mathrm{OR}_nq}\|_\infty\ge 1/3$ for any $n$.
From the minimax principle, it is sufficient to show a randomized protocol for the XOR game for the negation of the equality function with bias at least $1/3$ for any input $(x,y)\in\{0,1\}^n\times\{0,1\}^n$.
We first consider two protocols for XOR game of the negation of the equality function, and then take a probabilistic mixture of them.
The first protocol always answer 1, i.e., $A(x)=0$, $B(y)=1$.
The second protocol uses shared random bits $r\in\{0,1\}^n$ and take inner products with inputs, i.e., $A(x)=\langle x, r\rangle$, $B(y)=\langle y, r\rangle$.
The bias of the first protocol is $1-2\lambda$ where $\lambda$ denotes the probability of $x=y$.
The bias of the second protocol is $\lambda$~\cite{kushilevitz1997communication}.
Hence, if we choose the first protocol with probability 1/3 and choose the second protocol with probability 2/3,
we get the bias $1/3$.
\end{proof}
Next, we show the worst i.i.d. input distribution for the equality function.
Similarly to the general case, we can assume that an input distribution for each bit is an XOR function.
\begin{lemma}
For $n\ge 2$,
\begin{equation*}
\min_{\nu}\left\|\widehat{\mathrm{OR}_n(2\nu)^{\otimes n}}\right\|_\infty
=
1-2 \lambda^{*n}
\end{equation*}
where $\nu$ denotes a distribution on $\{0,1\}$, and where $\lambda^*$ denotes the unique root in $[1/2,1]$ of
\begin{equation}\label{eq:polyeq}
4\lambda^n-(2\lambda-1)^n-1=0.
\end{equation}
Furthermore,
\begin{equation*}
\lim_{n\to\infty}
\min_{\nu}\left\|\widehat{\mathrm{OR}_n(2\nu)^{\otimes n}}\right\|_\infty
=
2\sqrt{3}-3.
\end{equation*}
\end{lemma}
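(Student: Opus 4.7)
The plan is to parametrize $\nu$ on $\{0,1\}$ by $\lambda := \nu(0) \in [0,1]$, so that $q(z) := (2\nu)^{\otimes n}(z)$ is a product measure whose Fourier coefficients factor bit-wise into $\widehat q(S) = (2\lambda-1)^{|S|}$. I would then write $\mathrm{OR}_n(z) = -1 + 2\cdot\mathbf{1}[z=(+1)^n]$ in the $\pm 1$ convention and expand the indicator as $\mathbf{1}[z=(+1)^n] = 2^{-n}\sum_{S \subseteq [n]} \chi_S(z)$, yielding the closed form
$$\widehat{\mathrm{OR}_n q}(S) = 2\lambda^n - (2\lambda-1)^{|S|}.$$
Since this depends only on $k := |S|$, the problem reduces to minimizing $M_n(\lambda) := \max_{0 \le k \le n} |2\lambda^n - (2\lambda-1)^k|$ over $\lambda \in [0,1]$.

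For $\lambda \in [1/2,1]$, the sequence $(2\lambda-1)^k$ decreases monotonically in $k$ from $1$ to $(2\lambda-1)^n$, so the maximum over $k$ is attained at $k=0$ or $k=n$, giving $M_n(\lambda) = \max\bigl(|1-2\lambda^n|,\,2\lambda^n-(2\lambda-1)^n\bigr)$. As $\lambda$ increases, $|1-2\lambda^n|$ first decreases to $0$ at $\lambda = 2^{-1/n}$ and then rises, whereas $2\lambda^n-(2\lambda-1)^n$ is strictly increasing; hence $M_n$ is minimized where the two expressions balance, i.e.\ $1-2\lambda^{*n} = 2\lambda^{*n}-(2\lambda^*-1)^n$, which is exactly \eqref{eq:polyeq}. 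Uniqueness of $\lambda^*$ in $[1/2,1]$ follows from $h'(\lambda) = 2n\bigl(2\lambda^{n-1}-(2\lambda-1)^{n-1}\bigr) > 0$ on $[1/2,1)$. The range $\lambda \in [0,1/2)$ can be excluded since the $S=\emptyset$ coefficient alone gives $M_n(\lambda) \ge 1-2\lambda^n > 1-2^{1-n} \ge 1-2\lambda^{*n}$ (the last inequality from $\lambda^* \ge 1/2$), so the overall minimum is attained at $\lambda^*$ with value $1-2\lambda^{*n}$.

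For the limit, \eqref{eq:polyeq} gives $(2\lambda^*-1)^n = 4\lambda^{*n}-1$, and combined with $(2\lambda^*-1)^n \in [0,1]$ this yields $\lambda^{*n} \in [1/4,1/2]$. Moreover, for any fixed $\lambda<1$, $h_n(\lambda) \to -1$, forcing $\lambda^* \to 1$; writing $\lambda^* = 1-\epsilon_n$ with $\epsilon_n \to 0$, the Taylor expansion $\log(1-t) = -t(1+O(t))$ gives $\log\lambda^{*n} = -n\epsilon_n(1+o(1))$ and $\log(2\lambda^*-1)^n = -2n\epsilon_n(1+o(1))$. Along any convergent subsequence $\lambda^{*n}\to u^*$ this yields $(2\lambda^*-1)^n \to u^{*2}$, and combined with the identity $(2\lambda^*-1)^n \to 4u^*-1$ I get $u^{*2}-4u^*+1=0$, whose unique root in $[1/4,1/2]$ is $u^* = 2-\sqrt{3}$. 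Convergence of the full sequence then follows from uniqueness of the limit point, so $1-2\lambda^{*n} \to 2\sqrt{3}-3$.

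The main obstacle will be the asymptotic argument: carefully extracting the quadratic $u^2 - 4u + 1 = 0$ from the Taylor expansion of the two exponents $n\log\lambda^*$ and $n\log(2\lambda^*-1)$ while ensuring no spurious limit points. The preceding Fourier computation and the $k$-endpoint case analysis are routine.
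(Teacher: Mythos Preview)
Your proof is correct and follows essentially the same approach as the paper: the same Fourier computation reducing to $|2\lambda^n-(2\lambda-1)^{|S|}|$, the same endpoint analysis yielding the balance equation~\eqref{eq:polyeq}, and the same quadratic $u^2-4u+1=0$ for the limit. Your subsequence-based limit argument and explicit exclusion of $\lambda\in[0,1/2)$ are in fact slightly more careful than the paper's own proof, which simply posits the ansatz $\lambda^*=1-c/n$ without justification.
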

\begin{proof}
\begin{align*}
\min_{\nu}\max_{S\subseteq[n]} |\widehat{\mathrm{OR}_n(2\nu)^{\otimes n}}(S)|
&=
\min_{\nu}\max_{S\subseteq[n]} \left|\mathbb{E}\left[\mathrm{OR}(z)(2\nu)^{\otimes n}(z)\prod_{i\in S} z_i\right]\right|\\
&=
\min_{\nu}\max_{S\subseteq[n]} \left|\mathbb{E}\left[\left(2\mathbb{I}\left\{z=+1\right\}-1\right)(2\nu)^{\otimes n}(z)\prod_{i\in S} z_i\right]\right|\\
&=
\min_{\nu}\max_{S\subseteq[n]} \left| 2 \nu(0)^n - (\nu(0)-\nu(1))^{|S|}\right|\\
&=
\min_{\nu}\max\left\{1- 2 \nu(0)^n, 2 \nu(0)^n - (\nu(0)-\nu(1))^{n}\right\}\\
&=
\min_{\nu}\max\left\{1- 2 \nu(0)^n, 2 \nu(0)^n - (2\nu(0)-1)^{n}\right\}
\end{align*}
Let $u_n(\lambda):=1-2\lambda^n$ and $v_n(\lambda):=2\lambda^n-(2\lambda-1)^n$.
It is easy to see that $u_n(\lambda)$ is monotonically decreasing for $\lambda\in[0,1]$, and that $v_n(\lambda)$ is monotonically increasing for $\lambda\in[1/2,1]$.
For $n\ge 2$, $u_n(1/2)\ge v_n(1/2)$.
Hence, for $n\ge 2$, we obtain
\begin{equation*}
\min_{\nu}\max\left\{1- 2 \nu(0)^n, 2 \nu(0)^n - (2\nu(0)-1)^{n}\right\}
=1-2\lambda^{*n}
\end{equation*}
where $\lambda^*$ is the unique root in $[1/2,1]$ of~\eqref{eq:polyeq}.
Let $\lambda_n=1-c/n$ for some constant $c>0$.
Then, we obtain
\begin{align*}
\lim_{n\to\infty}
4\lambda_n^n-(2\lambda_n-1)^n-1=
4\mathrm{e}^{-c} - \mathrm{e}^{-2c} - 1
\end{align*}
Let $\varphi = \mathrm{e}^{-c} < 1$.
\begin{align*}
4\varphi - \varphi^2 - 1 = 0
\iff
\varphi^2 - 4\varphi + 1 = 0
\iff
\varphi = 2 - \sqrt{3}
\end{align*}
Then, we obtain
\begin{equation*}
\lim_{n\to\infty} 1-2\lambda^{*n}
= 1-2\mathrm{e}^{-c}
= 1- 2(2-\sqrt{3}) = 2\sqrt{3}-3.
\qedhere
\end{equation*}
\end{proof}

\section{Remarks}
\subsection{Linial and Shraibman's lower bound for XOR functions}
When we apply Linial and Shraibman's lower bound~\eqref{eq:ls} for XOR function $g^\oplus$, we obtain
\begin{align*}
R_\epsilon(g^\oplus)\ge 2  \log\max_{h, \mu}\frac{(1-\epsilon)\mathbb{E}_{(x,y)\sim \mu}[g^\oplus(x,y)h(x,y)]-\epsilon}{\beta^*(h,\mu)}.
\end{align*}
For $\epsilon=0$, we obtain
\begin{align}
R_0(g^\oplus)&\ge 2  \log\max_{h, q}\frac{\mathbb{E}_{(x,y)\sim 2^{-2n}q^\oplus}[g^\oplus(x,y)h^\oplus(x,y)]}{\beta^*(h^\oplus,2^{-2n}q^\oplus)}\nonumber\\
&= 2  \log\max_{h,q} \frac{\sum_S \widehat{g}(S)\widehat{hq}(S)}{\|\widehat{hq}\|_\infty}
=2\log \|\widehat{g}\|_1.
\label{eq:inf1}
\end{align}
This fact was shown in~\cite{linial2009lower, TCS-040}.
For $\epsilon\in(0,1/2)$, Linial and Shraibman's lower bound can be written as
\begin{align}\label{eq:apx}
R_\epsilon(g^\oplus)&\ge 2\log\left[(1-\epsilon)\|\widehat{g}\|_1^{\frac{\epsilon}{1-\epsilon}}\right]
\end{align}
where
\begin{align*}
\|\widehat{g}\|_1^\epsilon&:=\min\left\{\|\widehat{h}\|_1\mid h\colon \{+1,-1\}^n\to\mathbb{R}, |h(x)-g(x)|\le \epsilon\, \forall x\right\}
\end{align*}
is called approximate Fourier $\ell_1$ norm~\cite{linial2009lower}.
Obviously, $\|\widehat{g}\|_1^\epsilon\le (1-\epsilon)\|\widehat{g}\|_1$.
However, no lower bound of the approximate Fourier $\ell_1$ norm by the exact Fourier $\ell_1$ norm has been known for general $\epsilon\in(0,1)$.
Here, we give a necessary and sufficient condition for the equality $\|\widehat{g}\|_1^\epsilon = (1-\epsilon)\|\widehat{g}\|_1$.
\begin{definition}
For $f\colon\{+1,-1\}^n\to\{+1,-1\}$,
\begin{align*}
f^* &:= \bigl\{g\colon\{+1,-1\}^n\to[-1,+1] \mid g(x) = \sign(\widehat{f}(S_x)), \forall x, \widehat{f}(S_x)\ne 0\bigr\}\\
f^{**} &:= \bigl\{h\colon\{+1,-1\}^n\to[-1,+1] \mid \exists g\in f^*, h(x) = \sign(\widehat{g}(S_x)), \forall x, \widehat{g}(S_x)\ne 0\bigr\}.
\end{align*}
where $S_x:=\{i\mid x_i=-1\}$ and
$\sign(r):=r/|r|$ for $r\ne 0$.
\end{definition}

The complementary slackness condition gives the following theorem.
\begin{theorem}\label{thm:kkt}
$\|\widehat{f}\|_1^\epsilon = (1-\epsilon)\|\widehat{f}\|_1$
if and only if $f\in f^{**}$.
\end{theorem}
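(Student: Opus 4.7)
The plan is to recast $\|\widehat f\|_1^\epsilon$ as a linear program, compute its LP dual, and read off the condition $f\in f^{**}$ from complementary slackness at the natural primal candidate $h^\star:=(1-\epsilon)f$.

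First I would write the primal as minimizing $\sum_S(\widehat h^+(S)+\widehat h^-(S))$ over nonnegative $\widehat h^\pm(S)$, with $\widehat h:=\widehat h^+-\widehat h^-$, subject to the pointwise bounds $f(x)-\epsilon\le h(x)\le f(x)+\epsilon$. A routine dualization with multipliers $\alpha^\pm(x)\ge 0$, followed by setting $g(x):=2^n(\alpha^+(x)-\alpha^-(x))$, yields
\[
\|\widehat f\|_1^\epsilon=\max\bigl\{\mathbb{E}[gf]-\epsilon\,\mathbb{E}[|g|]\;:\;g\colon\{+1,-1\}^n\to\mathbb{R},\ \|\widehat g\|_\infty\le 1\bigr\}.
\]
Since $h^\star$ is primal feasible with value $(1-\epsilon)\|\widehat f\|_1$, strong LP duality makes the theorem equivalent to: a dual feasible $g$ attaining the value $(1-\epsilon)\|\widehat f\|_1$ exists iff $f\in f^{**}$.

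For ``$f\in f^{**}\Rightarrow$ equality'', given a witness $\sigma\in f^*$ with $f(x)=\sign(\widehat\sigma(S_x))$ on the support, I would define $g$ by prescribing $\widehat g(S_x):=\sigma(x)$; then $\|\widehat g\|_\infty\le 1$ (since $\sigma\in[-1,1]$), $\widehat g=\sign(\widehat f)$ on $\operatorname{supp}(\widehat f)$ so $\mathbb{E}[gf]=\|\widehat f\|_1$, and the self-dual identity $\chi_{S_y}(x)=\chi_{S_x}(y)$ gives $g(x)=2^n\widehat\sigma(S_x)$. The defining sign condition of $f^{**}$ then yields $\sign g(x)=f(x)$ wherever $g(x)\ne 0$, so $g(x)f(x)=|g(x)|$ pointwise, hence $\mathbb{E}[gf]-\epsilon\mathbb{E}[|g|]=(1-\epsilon)\|\widehat f\|_1$.

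Conversely, assuming equality I would apply complementary slackness to an optimal pair $(h^\star,g^\star)$. Tightness of the dual constraint whenever $\widehat{h^\star}(S)\ne 0$ forces $\widehat{g^\star}(S)=\sign(\widehat f(S))$ on $\operatorname{supp}(\widehat f)$, which under the bijection $S\leftrightarrow x\mapsto S_x$ places $\widetilde{g^\star}(x):=\widehat{g^\star}(S_x)$ into $f^*$. Tightness of the primal constraints, together with the observation that for $h^\star=(1-\epsilon)f$ only the lower bound is tight where $f(x)=+1$ and only the upper bound where $f(x)=-1$, forces $g^\star(x)f(x)\ge 0$ throughout. Rewriting $g^\star(x)=2^n\widehat{\widetilde{g^\star}}(S_x)$ by the same self-dual identity, this sign alignment reads $f(x)=\sign\bigl(\widehat{\widetilde{g^\star}}(S_x)\bigr)$ on $\{x:g^\star(x)\ne 0\}$, which is precisely ``$f\in f^{**}$'' with witness $\widetilde{g^\star}$. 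I expect the only real obstacle to be bookkeeping: the definition of $f^{**}$ sits across two nested Fourier transforms tied together by the bijection $S\leftrightarrow x$ that makes the Walsh basis self-dual; once the identity $g(x)=2^n\widehat{\widetilde g}(S_x)$ is in place, strong LP duality and complementary slackness deliver both directions simultaneously.
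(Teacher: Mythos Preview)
Your proposal is correct and follows essentially the same route as the paper: both recast $\|\widehat f\|_1^\epsilon$ as a linear program and extract the condition $f\in f^{**}$ from complementary slackness at the primal candidate $(1-\epsilon)f$. The only difference is cosmetic---the paper works with the Lagrangian and packages the KKT conditions into a lemma characterizing the primal optimum, whereas you compute the dual $\max\{\mathbb{E}[gf]-\epsilon\,\mathbb{E}[|g|]:\|\widehat g\|_\infty\le1\}$ explicitly and use the self-dual identity $g(x)=2^n\widehat{\widetilde g}(S_x)$ to pass between the two Fourier layers.
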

The proof is in Appendix~\ref{apx:kkt}.
\begin{remark}
Exhaustive search on computer shows that
all of 256 boolean functions on 3 variables,
51200 of 65536 boolean functions on 4 variables and
at least 2839187456 of 4294967296 boolean functions on 5 variables satisfy
$f\in f^{**}$.
\end{remark}
For $\mathrm{OR}_n$, $\mathrm{OR}_n^*$ only includes the negation of $\mathrm{OR}_n$.
Hence, $\mathrm{OR}_n^{**}$ only includes $\mathrm{OR}_n$.
Then, Theorem~\ref{thm:kkt} gives $\|\widehat{\mathrm{OR}_n}\|_1^\epsilon=(1-\epsilon)\|\widehat{\mathrm{OR}_n}\|_1$ so that
\begin{equation*}
R_\epsilon(\mathrm{EQ}_n)\ge 2\log\left[(1-2\epsilon)\|\widehat{\mathrm{OR}}\|_1\right]=2\log\left[(1-2\epsilon)(3-2^{-n+2})\right]
\end{equation*}
from~\eqref{eq:apx}.
In general, it is difficult to lower bound the approximate Fourier $\ell_1$ norm.
The following theorem gives a simple lower bound which may be useful for some case.
\begin{theorem}
\begin{equation*}
R_\epsilon(g^\oplus) \ge 2\log\left((1-\epsilon)\|\widehat{g}\|_1 - \epsilon \|\widehat{g^*}\|_1\right).
\end{equation*}
where
\begin{equation*}
\|\widehat{g^*}\|_1 := \min\{\|\widehat{h}\|_1\mid h\in g^*\}.
\end{equation*}
\end{theorem}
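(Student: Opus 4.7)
The plan is to combine the known approximate-Fourier-$\ell_1$ bound~\eqref{eq:apx} of Linial--Shraibman with a new pointwise comparison between the approximate and exact Fourier $\ell_1$ norms: for every $\delta\in[0,1]$,
\begin{equation}
\|\widehat{g}\|_1^{\delta} \;\ge\; \|\widehat{g}\|_1 \,-\, \delta\,\|\widehat{g^*}\|_1. \label{eq:plan-key}
\end{equation}
Setting $\delta=\epsilon/(1-\epsilon)$ and plugging~\eqref{eq:plan-key} into~\eqref{eq:apx} immediately gives the target bound $R_\epsilon(g^\oplus) \ge 2\log[(1-\epsilon)\|\widehat{g}\|_1 - \epsilon\|\widehat{g^*}\|_1]$ whenever the argument of the logarithm is positive (otherwise the statement is vacuous). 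The whole proof therefore reduces to establishing~\eqref{eq:plan-key}.

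To prove~\eqref{eq:plan-key}, I would fix any $h$ with $\|h-g\|_\infty\le\delta$ and construct a dual certificate on the Fourier side. Take $\tau:2^{[n]}\to[-1,1]$ with $\tau(S)=\sign(\widehat{g}(S))$ whenever $\widehat{g}(S)\ne 0$, leaving the other entries free in $[-1,1]$. The duality between $\ell_\infty$ and $\ell_1$ gives
\begin{equation*}
\|\widehat{h}\|_1 \;\ge\; \sum_S \tau(S)\widehat{h}(S) \;=\; \|\widehat{g}\|_1 \,+\, \sum_S \tau(S)\bigl(\widehat{h}(S)-\widehat{g}(S)\bigr),
\end{equation*}
so it is enough to control the error term by $\delta\,\|\widehat{p}\|_1$ for some $p\in g^*$.

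The crux is a Plancherel-style identity exchanging primal and Fourier sides through the bijection $S\leftrightarrow x_S$, where $x_S\in\{+1,-1\}^n$ has $-1$s exactly in the positions indexed by $S$. Defining $p(x):=\tau(S_x)$, which automatically lies in $g^*$ by our choice of $\tau$, I expect
\begin{equation*}
\sum_S \tau(S)\widehat{f}(S) \;=\; \sum_S \widehat{p}(S)\,f(x_S) \qquad\text{for every } f\colon\{+1,-1\}^n\to\mathbb{R}.
\end{equation*}
I would verify this by introducing $F(x):=\widehat{f}(S_x)$, rewriting the left-hand side as $\sum_x p(x)F(x)$, applying Parseval, and noting that a short Walsh-character computation gives $\widehat{F}(S)=2^{-n}f(x_S)$. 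Granting the identity, apply it to $f=h-g$: the triangle inequality plus $\|h-g\|_\infty\le\delta$ yields $|\sum_S\tau(S)\widehat{(h-g)}(S)|\le\delta\|\widehat{p}\|_1$, and hence $\|\widehat{h}\|_1\ge\|\widehat{g}\|_1-\delta\|\widehat{p}\|_1$. Taking the infimum first over the free part of $\tau$ (equivalently, over $p\in g^*$) and then over admissible $h$ yields~\eqref{eq:plan-key}.

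The main technical step is this Plancherel-style identity; it is the mechanism by which the sign dual certificate $\tau$ on the Fourier side translates into a primal function $p\in g^*$ whose Fourier $\ell_1$ norm controls the error. Everything else is $\ell_\infty$--$\ell_1$ duality, the trivial pointwise bound $|f(x_S)|\le\|f\|_\infty$, and substitution into~\eqref{eq:apx}.
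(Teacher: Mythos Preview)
Your argument is correct. The Plancherel-style identity you isolate is exactly the computation the paper carries out, and once it is in hand the rest is $\ell_\infty$--$\ell_1$ duality as you say.

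The packaging differs from the paper's in a small but pleasant way. The paper works directly with the unspecialised Linial--Shraibman bound
\[
R_\epsilon(g^\oplus)\ge 2\log\max_{h,q}\frac{(1-\epsilon)\sum_S\widehat{g}(S)\widehat{hq}(S)-\epsilon}{\|\widehat{hq}\|_\infty},
\]
and simply \emph{chooses} the witness $\widehat{hq}(S)=c\,G(x_S)$ for some $G\in g^*$; the inverse Fourier transform then gives $hq(z)=c\,2^n\widehat{G}(S_z)$, which forces $c=1/\|\widehat{G}\|_1$ from the normalisation $\sum_z|hq(z)|=2^n$, and the bound drops out in one line. You instead route the same dual certificate through the already-specialised form~\eqref{eq:apx} and extract the standalone inequality $\|\widehat{g}\|_1^{\delta}\ge\|\widehat{g}\|_1-\delta\,\|\widehat{g^*}\|_1$, which is a clean statement in its own right and makes transparent why $\|\widehat{g^*}\|_1$ is the right correction term. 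The underlying mechanism---the self-duality $\chi_S(x_T)=(-1)^{|S\cap T|}=\chi_T(x_S)$ that turns a sign pattern $\tau$ on the Fourier side into a primal function $p\in g^*$---is identical in both proofs.
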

\begin{proof}
From Linial and Shraibman's bound for XOR functions,
\begin{align*}
R_\epsilon(g^\oplus)&\ge 2  \log\max_{h,q} \frac{(1-\epsilon) \sum_S \widehat{g}(S)\widehat{hq}(S) - \epsilon}{\|\widehat{hq}\|_\infty}.
\end{align*}
Here $hq\colon\{+1,-1\}^n\to\mathbb{R}$ is an arbitrary function satisfying $\sum_{z\in\{+1,-1\}^n}|hq(z)| = 2^n$.
For some $G\in g^*$,
we choose $hq$ such that
\begin{equation*}
\widehat{hq}(S) =  \|\widehat{hq}\|_\infty\, G(S_x).
\end{equation*}
Then,
\begin{align*}
hq(z) &= \sum_{S\subseteq [n]}  \|\widehat{hq}\|_\infty\, G(S_x) \prod_{i\in S} z_i\\
&= \|\widehat{hq}\|_\infty\,  2^n \widehat{G}(S_z).
\end{align*}
Hence,
\begin{align*}
\|\widehat{hq}\|_\infty = \frac1{\|\widehat{G}\|_1}.
\end{align*}
The theorem is obtained.
\end{proof}
From the above proof, we obtain $\|\widehat{g^*}\|_1\ge\|\widehat{g}\|_1$.
From the Cauchy--Schwartz inequality, we obtain $\|\widehat{g^*}\|_1\le \sqrt{\|\widehat{g}\|_0}$, which gives a weaker bound
$R_\epsilon(g^\oplus) \ge 2\log\left((1-\epsilon)\|\widehat{g}\|_1 - \epsilon \sqrt{\|\widehat{g}\|_0}\right)$ shown in~\cite{linial2009lower}.

\subsection{Relationship with information complexity of the equality function}
Braverman and Rao showed that information complexity is equal to (direct product-)amortized communication complexity~\cite{braverman2014information}.
The information complexity is defined by
\begin{equation*}
\mathrm{IC}(f, \mu) := \min_{\pi} I(X; \pi(X, Y)\mid Y) + I(Y; \pi(X, Y)\mid X)
\end{equation*}
where $I$ denotes the mutual information, where $\pi$ denotes a protocol, and where $\pi(X,Y)$ denotes
the public randomness and a transcript when $X$ and $Y$ are given to a protocol $\pi$ (See~\cite{braverman2014information} and~\cite{braverman2015interactive} for details).
In~\cite{braverman2015interactive}, the information complexity of the equality function is upper bounded by 9.
In this section, this upper bound is improved to $2\log 5\approx 4.64$.
We consider a particular protocol introduced in~\cite{braverman2015interactive}.
Alice and Bob use a shared invertible random matrix $A$, whose $i$-th row is denoted by $a_i$.
At $i$-th step Alice and Bob send $\langle x, a_i\rangle$ and $\langle y,a_i\rangle$ to each other, respectively.
If $\langle x, a_i\rangle \ne \langle y, a_i\rangle$, the protocol terminates and output 0.
If $\langle x, a_i\rangle = \langle y, a_i\rangle$ for all $i=1,\dotsc,n$, the protocol output 1.
Let $Z$ be a random variable taking a value 1 if $x=y$ and 0 if $x\ne y$.
Assume $\Pr(Z=1)=\lambda$.
Then, we obtain
\begin{align*}
I(X;\pi(X,Y)\mid Y)&=
I(X,Z;\pi(X,Y)\mid Y)\\
&= I(Z;\pi(X,Y)\mid Y) + I(X;\pi(X,Y)\mid Y, Z)\\
&= H(Z\mid Y) + I(X;\pi(X,Y)\mid Y, Z)\\
&\le h(\lambda) + (1-\lambda) \sum_{i\ge 1}\frac1{2^i}i\\
&= h(\lambda) + 2(1-\lambda)
\end{align*}
where $h(\lambda):=-\lambda\log\lambda-(1-\lambda)\log(1-\lambda)$.
Here, $h(\lambda)+2(1-\lambda)$ is maximized at $\lambda=1/5$ with the maximum $\log 5$.
Hence, $2\log 5$ is an upper bound of the information complexity.

Here, any protocol obviously requires at least $h(\lambda)$ bits for each direction in average.
If we assume that at least extra 1 bit is required  for each direction if $x\ne y$, we obtain a lower bound $h(\lambda)+1-\lambda$ for each direction,
which is maximized at $\lambda=1/3$ with the maximum $\log 3$.
Hence, the lower bound $2\log 3$ of XOR-amortized communication complexity in Theorem~\ref{thm:exm} intuitively means that
``each of the equality problems must be solved by using at least $h(\lambda) + 1-\lambda$ bits for each direction''.

\section*{Acknowledgment}
The author would like to thank Mark Braverman for the insight that the lower bound 2 for XOR-amortized communication complexity of the equality function seems to be trivial.
The author would like to thank Ryan O'Donnell for insightful discussions.

\bibliographystyle{plain}
\bibliography{biblio}

\appendix
\section{Quantum XOR game for XOR functions}\label{apx:qxor}

\begin{lemma}[\cite{PhysRevLett.99.180502}]
$\beta^*(g^\oplus, 2^{-2n}q^\oplus) = \beta(g^\oplus, 2^{-2n}q^\oplus)$ for any $g$ and $q$.
\end{lemma}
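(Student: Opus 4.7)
The plan is to deduce the lemma from Tsirelson's characterization of the quantum bias combined with the multiplicativity of Fourier characters under XOR. Since $\beta \le \beta^*$ is immediate and the preceding lemma already gives $\beta(g^\oplus, 2^{-2n}q^\oplus) = \|\widehat{gq}\|_\infty$, only the matching upper bound on $\beta^*$ needs to be supplied.

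First, I would invoke Tsirelson's theorem to write
\begin{equation*}
\beta^*(g^\oplus, 2^{-2n}q^\oplus) = \sup_{u_x,\,v_y} \frac{1}{2^{2n}}\sum_{x,y\in\{0,1\}^n} (-1)^{g(x\oplus y)}\, q(x\oplus y)\,\langle u_x, v_y\rangle,
\end{equation*}
where the supremum ranges over collections of unit vectors $\{u_x\}_{x\in\{0,1\}^n}$, $\{v_y\}_{y\in\{0,1\}^n}$ in a real Hilbert space. Set $h(z) := (-1)^{g(z)} q(z)$ and expand $h(z) = \sum_S \widehat{h}(S)\,\chi_S(z)$ with $\chi_S(z) = \prod_{i\in S}(-1)^{z_i}$. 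The identity $\chi_S(x\oplus y) = \chi_S(x)\chi_S(y)$ lets the objective factor as
\begin{equation*}
\sum_{S\subseteq[n]} \widehat{h}(S)\,\langle U_S, V_S\rangle, \qquad U_S := \frac{1}{2^n}\sum_x \chi_S(x)\,u_x, \quad V_S := \frac{1}{2^n}\sum_y \chi_S(y)\,v_y.
\end{equation*}

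Next, a Parseval-type computation using character orthogonality $\sum_S \chi_S(x)\chi_S(x') = 2^n\,\mathbb{I}\{x=x'\}$ collapses $\sum_S \|U_S\|^2$ to $2^{-n}\sum_x \|u_x\|^2 = 1$, and similarly $\sum_S \|V_S\|^2 = 1$. Two applications of Cauchy--Schwarz then yield
\begin{equation*}
\Bigl|\sum_S \widehat{h}(S)\,\langle U_S, V_S\rangle\Bigr| \le \|\widehat{h}\|_\infty \sum_S \|U_S\|\,\|V_S\| \le \|\widehat{gq}\|_\infty,
\end{equation*}
so $\beta^* \le \|\widehat{gq}\|_\infty = \beta$, and equality follows.

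The main obstacle, as far as one exists, is pinning down Tsirelson's theorem in the right form: one must know that any quantum XOR strategy with $\pm 1$-valued observables on a shared state produces an expectation of the form $\langle u_x, v_y\rangle$ for real unit vectors $u_x, v_y$, and conversely that any such collection of unit vectors can be realized by a quantum strategy. Once this inner-product form is in hand, the rest is a mechanical mirror of the classical Fourier lemma preceding it, with scalars $A(x), B(y) \in \{\pm 1\}$ replaced by unit vectors and the step $\sum_S |\widehat{A}(S)\widehat{B}(S)| \le 1$ replaced by $\sum_S \|U_S\|\,\|V_S\| \le 1$ via the same Cauchy--Schwarz argument.
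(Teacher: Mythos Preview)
Your proof is correct and follows essentially the same approach as the paper: invoke Tsirelson's characterization, Fourier-expand $gq$, and use Parseval plus Cauchy--Schwarz to bound the resulting bilinear sum by $\|\widehat{gq}\|_\infty$. The only cosmetic difference is that you package the coordinates into vector-valued Fourier coefficients $U_S,V_S$ whereas the paper works coordinate-by-coordinate with $\widehat{V}_i(S),\widehat{W}_i(S)$; these are the same computation.
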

\begin{proof}
From Tsirelson's characterization~\cite{tsirel1987quantum}, $\beta^*(g^\oplus, 2^{-n}q^\oplus)$ is equal to
\begin{equation*}
\max_{V, W}\mathbb{E}\left[g(x\oplus y)q(x\oplus y) \langle V(x), W(y)\rangle\right]
\end{equation*}
where $V(x)$ and $W(y)$ are unit vectors on $\mathbb{R}$ of dimension $2n$ for all $x,\,y\in\{0,1\}^n$.
Then, we obtain
\begin{align*}
&\mathbb{E}\left[g(x\oplus y)q(x\oplus y) \langle V(x), W(y)\rangle\right]
=
\mathbb{E}\left[g(x\oplus y)q(x\oplus y) \sum_{i=1}^{2n} V_i(x) W_i(y)\right]\\
&= \sum_{S\subseteq [n]}\widehat{gq}(S) \sum_{i=1}^{2n}\widehat{V}_i(S)\widehat{W}_i(S)\\
&\le \max_{S\subseteq [n]}|\widehat{gq}(S)| \sum_{S\subseteq [n]}\sum_{i=1}^{2n}\left|\widehat{V}_i(S)\widehat{W}_i(S)\right|\\
&\le \max_{S\subseteq [n]}|\widehat{gq}(S)| \sqrt{\left(\sum_{S\subseteq [n]}\sum_{i=1}^{2n}\widehat{V}_i(S)^2\right)\left(\sum_{S\subseteq [n]}\sum_{i=1}^{2n}\widehat{W}_i(S)^2\right)}\\
&= \max_{S\subseteq [n]}|\widehat{gq}(S)| \sqrt{\left(\frac1{2^n}\sum_{i=1}^{2n} \sum_{x\in\{0,1\}^n} V_i(x)^2\right)\left(\frac1{2^n}\sum_{i=1}^{2n} \sum_{y\in\{0,1\}^n} W_i(y)^2\right)}\\
&= \max_{S\subseteq [n]}|\widehat{gq}(S)| \sqrt{\left(\sum_{x\in\{0,1\}^n} \frac1{2^n}\right)\left(\sum_{y\in\{0,1\}^n} \frac1{2^n}\right)}\\
&= \max_{S\subseteq [n]}|\widehat{gq}(S)|.
\qedhere
\end{align*}
\end{proof}

\section{Proof of Theorem~\ref{thm:kkt}}\label{apx:kkt}
Let $\mathcal{\chi}_S(x):=\prod_{i\in S}x_i$.
Then, $\|\widehat{f}\|_1^\epsilon$ is the solution of the following optimization problem.
\begin{align*}
\min:& \sum_{S\subseteq[n]} \left|\frac1{2^n}\sum_x g(x) \chi_S(x)\right|\\
\text{subject to}:& |g(x) - f(x)| \le \epsilon,\qquad \forall x\in\{+1,-1\}^n.
\end{align*}
\begin{lemma}\label{lem:lp}
$g\colon\{+1,-1\}^n\to\mathbb{R}$ is optimal of the above optimization problem if and only if
there exists $h\in g^*$ such that
\begin{align*}
g(x) =
\begin{cases}
 f(x) - \epsilon,& \text{if } \widehat{h}(S_x)>0 \\
 f(x) + \epsilon,& \text{if } \widehat{h}(S_x)<0 \\
 \in [f(x)-\epsilon, f(x)+\epsilon],& \text{otherwise.}
\end{cases}
\end{align*}
\end{lemma}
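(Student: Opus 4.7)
The problem is a linear program once one linearizes the objective by introducing slack variables $t_S\ge 0$ with $t_S\ge\widehat{g}(S)$ and $t_S\ge-\widehat{g}(S)$, and splits the box constraints as $f(x)+\epsilon-g(x)\ge 0$ and $g(x)-f(x)+\epsilon\ge 0$. I will attach dual multipliers $\alpha_S^+,\alpha_S^-\ge 0$ to the two $t_S$ constraints and $\beta_x^+,\beta_x^-\ge 0$ to the two $\epsilon$-constraints, and read off both directions of the iff from LP strong duality together with complementary slackness.

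Stationarity of the Lagrangian in $t_S$ forces $\alpha_S^++\alpha_S^-=1$, so $H(S):=\alpha_S^+-\alpha_S^-\in[-1,1]$. Complementary slackness between $\alpha_S^\pm$ and the two $t_S$ constraints at an optimum gives $\alpha_S^+=1,\alpha_S^-=0$ when $\widehat{g}(S)>0$ and $\alpha_S^+=0,\alpha_S^-=1$ when $\widehat{g}(S)<0$, i.e.\ $H(S)=\sign(\widehat{g}(S))$ whenever $\widehat{g}(S)\ne 0$. Identifying points with subsets via $x\leftrightarrow S_x$ and setting $h(x):=H(S_x)$ therefore produces a function $h\in g^*$ in the sense of the definition just above the lemma.

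Stationarity in $g(x)$ reads $\frac{1}{2^n}\sum_S H(S)\chi_S(x)=\beta_x^--\beta_x^+$. Using $\chi_S(x)=(-1)^{|S\cap S_x|}$ and reindexing the sum by $S=S_y$, the left-hand side is exactly $\widehat{h}(S_x)$, so $\widehat{h}(S_x)=\beta_x^--\beta_x^+$. Complementary slackness on $\beta_x^\pm$ now gives precisely the statement of the lemma: $\widehat{h}(S_x)>0$ forces $\beta_x^->0$ and hence $g(x)=f(x)-\epsilon$; $\widehat{h}(S_x)<0$ forces $\beta_x^+>0$ and hence $g(x)=f(x)+\epsilon$; and $\widehat{h}(S_x)=0$ means $\beta_x^+=\beta_x^-$, which must be $0$ for $\epsilon>0$ (otherwise both $\epsilon$-constraints would be tight), leaving $g(x)$ free in $[f(x)-\epsilon,f(x)+\epsilon]$. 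This proves necessity.

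For sufficiency, given $g$ and $h\in g^*$ satisfying the displayed relations, I would define dual multipliers by $\alpha_S^+:=(1+H(S))/2$, $\alpha_S^-:=(1-H(S))/2$ and $\beta_x^+:=\max(0,-\widehat{h}(S_x))$, $\beta_x^-:=\max(0,\widehat{h}(S_x))$, verify primal and dual feasibility together with all stationarity identities above, and match the two objective values. Since the program is a finite LP, this certifies optimality of $g$. The only genuinely non-routine step is the Fourier bookkeeping identifying $\frac{1}{2^n}\sum_S H(S)\chi_S(x)$ with $\widehat{h}(S_x)$, which is where the bijection between $\{+1,-1\}^n$ and $2^{[n]}$ implicit in the definition of $g^*$ is actually used; once this identity is in hand, the rest is a standard application of LP duality.
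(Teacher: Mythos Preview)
Your proposal is correct and follows essentially the same route as the paper: reformulate as a linear program, read off KKT/complementary slackness to extract a sign function $h\in g^*$ from the dual, and use the Fourier identity $\tfrac1{2^n}\sum_S H(S)\chi_S(x)=\widehat{h}(S_x)$ to translate the $g$-stationarity condition into the stated case distinction, with the converse handled by exhibiting explicit dual multipliers. The only cosmetic difference is the linearization of $|\widehat g(S)|$ --- the paper uses the splitting $\widehat g_+(S)-\widehat g_-(S)$ with an equality constraint (dual variable $c_S$), whereas you use the epigraph variable $t_S\ge\pm\widehat g(S)$ (dual variables $\alpha_S^\pm$) --- and the resulting duals match via $c_S\leftrightarrow H(S)=\alpha_S^+-\alpha_S^-$.
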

\begin{proof}
The optimization problem is equivalent to the following linear program.
\begin{align*}
\min:& \sum_{S\subseteq[n]} \left(\widehat{g}_+(S) + \widehat{g}_-(S)\right)\\
\text{subject to}: 
&\, \widehat{g}_+(S) - \widehat{g}_-(S) = \frac1{2^n}\sum_x g(x) \chi_S(x),\qquad\forall S\subseteq[n]\\
&\, g(x) - f(x) \le \epsilon,\qquad \forall x\in\{+1,-1\}^n\\
&\, f(x) - g(x) \le \epsilon,\qquad \forall x\in\{+1,-1\}^n\\
&\, \widehat{g}_+(S) \ge 0,\qquad \forall S\subseteq[n]\\
&\, \widehat{g}_-(S) \ge 0,\qquad \forall S\subseteq[n]
\end{align*}
where $\widehat{g}_+(S)$ and $\widehat{g}_-(S)$ are variables of the above linear program.
The Lagrangian of this optimization problem is
\begin{align*}
&\mathcal{L}(g, \widehat{g}_+, \widehat{g}_-, c, \lambda, \rho,\mu_+,\mu_-):=
\sum_{S\subseteq[n]} \left(\widehat{g}_+(S) + \widehat{g}_-(S)\right)\\
&\quad - \sum_S c_S\left(\widehat{g}_+(S) - \widehat{g}_-(S) - \frac1{2^n}\sum_x g(x) \chi_S(x)\right)\\
&\quad +\sum_x \lambda(x)\left(g(x)-f(x)-\epsilon\right)
+\sum_x \rho(x)\left(f(x)-g(x)-\epsilon\right)\\
&\quad - \sum_S \mu_+(S)\widehat{g}_+(S) - \sum_S \mu_-(S)\widehat{g}_-(S)
\end{align*}
where the dual feasibility condition is
\begin{align*}
\lambda(x)&\ge 0,& \rho(x)&\ge 0,\qquad \forall x\in\{+1,-1\}^n\\
\mu_+(S)&\ge 0,& \mu_-(S)&\ge 0,\qquad \forall S\subseteq[n].
\end{align*}
First, we consider necessary conditions for optimal primal and dual solutions.
The partial derivatives of Lagrangian with respect to primal variables are
\begin{align*}
\frac{\partial \mathcal{L}}{\partial g(x)}&=
\lambda(x)-\rho(x)+\frac1{2^n}\sum_S c_S \chi_S(x)\\
\frac{\partial \mathcal{L}}{\partial \widehat{g}_+(S)}&=
1-c_S-\mu_+(S)\\
\frac{\partial \mathcal{L}}{\partial \widehat{g}_-(S)}&=
1+c_S-\mu_-(S).
\end{align*}
All of them must be zero at any optimal primal dual pair.
Hence, $c_S\in[-1,+1]$ for any $S\subseteq[n]$.
A complementary slackness condition says
\begin{align*}
\widehat{g}_+(S)>0 \,\Rightarrow\, \mu_+(S) = 0 \iff c_S = +1\\
\widehat{g}_-(S)>0 \,\Rightarrow\, \mu_-(S) = 0 \iff c_S = -1.
\end{align*}
Since for any primal optimal solution,
 $\widehat{g}_+(S)>0\iff\widehat{g}(S)>0$ and
 $\widehat{g}_-(S)>0\iff\widehat{g}(S)<0$,
the above condition means that $h(x):=c_{S_x}$ must be a member of $g^*$.
Furthermore,
from $\partial\mathcal{L}/\partial g(x)=0$,
\begin{align*}
\widehat{h}(S_x)>0 \,\Rightarrow\, \rho(x) >0 \Rightarrow g(x) = f(x)-\epsilon\\
\widehat{h}(S_x)<0 \,\Rightarrow\, \lambda(x) >0 \Rightarrow g(x) = f(x)+\epsilon
\end{align*}
This shows the one direction of this Theorem.

Conversely,
if there exists $h\in g^*$ satisfying the condition, we can choose values
\begin{align*}
\widehat{g}_+(S) &= \max\{0,\widehat{g}(S)\},&
\widehat{g}_-(S) &= \max\{0,-\widehat{g}(S)\}\\
c_{S_x} &= h(x)\\
\lambda(x) &= \max\{0, -\widehat{h}(S_x)\},&
\rho(x) &= \max\{0, \widehat{h}(S_x)\}\\
\mu_+(S) &= 1- c_S,&
\mu_-(S) &= 1+ c_S
\end{align*}
which satisfy all complementary slackness conditions, and hence are optimal.
\end{proof}

Theorem~\ref{thm:kkt} is immediately obtained from Lemma~\ref{lem:lp}.

\end{document}